\providecommand{\tabularnewline}{\\}
\providecommand{\algorithmname}{Algorithm}
\DeclareRobustCommand{\lyxsout}[1]{\ifx\\#1\else\sout{#1}\fi}
\numberwithin{equation}{section}
\numberwithin{figure}{section}
\theoremstyle{plain}
\newtheorem{thm}{\protect\theoremname}
\theoremstyle{plain}
\newtheorem{prop}[thm]{\protect\propositionname}
\theoremstyle{plain}
\newtheorem{fact}[thm]{\protect\factname}
\theoremstyle{definition}
\newtheorem{example}[thm]{\protect\examplename}
\theoremstyle{plain}
\newtheorem{lem}[thm]{\protect\lemmaname}
\date{}
\providecommand{\keywords}[1]{\textbf{\textit{Keywords---}} #1}
\providecommand{\examplename}{Example}
\providecommand{\factname}{Fact}
\providecommand{\lemmaname}{Lemma}
\providecommand{\propositionname}{Proposition}
\providecommand{\theoremname}{Theorem}
\begin{document}
\title{On Competitive Analysis for Polling Systems}
\author{Jin Xu\thanks{Corresponding author, Email: jinxu@tamu.edu, Industrial and Systems
Engineering, Texas A\&M University, TX 77843, USA}\and Natarajan Gautam\thanks{Email: gautam@tamu.edu, Industrial and Systems Engineering, Texas
A\&M University, TX 77843, USA}}
\maketitle
\begin{abstract}
Polling systems have been widely studied, however most of these studies
focus on polling systems with renewal processes for arrivals and random
variables for service times. There is a need driven by practical applications
to study polling systems with arbitrary arrivals (not restricted to
time-varying or in batches) and revealed service time upon a job's
arrival. To address that need, our work considers a polling system
with generic setting and for the first time provides the worst-case
analysis for online scheduling policies in this system. We provide
conditions for the existence of constant competitive ratios, and competitive
lower bounds for general scheduling policies in polling systems. Our
work also bridges the queueing and scheduling communities by proving
the competitive ratios for several well-studied policies in the queueing
literature, such as cyclic policies with exhaustive, gated or $l$-limited
service disciplines for polling systems. 
\end{abstract}
\keywords{Scheduling, Online Algorithm, worst-case Analysis, Competitive Ratio, Parallel Queues with Setup Times}

\section{Introduction}

This study has been motivated by operations in smart manufacturing
systems. As an illustration, consider a 3D printing machine that uses
a particular material informally called ``ink'' to print. Jobs of
the same prototype are printed using the same ink, and when a different
prototype (for simplicity, say a different color) is to be printed,
a different ink is required and the machine undergoes a setup that
takes time to switch inks. The unprocessed jobs of the same prototype
can be regarded as a ``queue''. This problem can thus be modeled
as a polling system where the server polls a queue, processes jobs,
incurs a setup time, processes another queue, and so on. In practice,
besides the ink (material), other factors such as processing temperature,
equipment setting and other processing requirements that are required
by different job prototypes will also incur setup times. 

Another interesting feature of 3D printing is that it is possible
to reveal the processing time of each job upon the job's arrival.
This is because, before getting printed, the printing requirements
such as temperature, nozzle route, printing speed, and so on are specified
for the job, and using that we can easily acquire the printing time
before processing. Therefore, it is unnecessary to assume that the
processing time of a job is stochastic at the start of processing,
even though many other queueing research papers do so. In this paper,
we assume that the processing time of jobs could be arbitrary, and
could be revealed deterministically upon arrival. Furthermore, the
3D printer that prints customized parts usually receives jobs with
different processing requirements. Job arrivals thus could be time-varying,
non-renewal, in batches, dependent, or even arrivals without a pattern.
It motivates us to consider the generic polling system without imposing
any stochastic assumptions on future arrivals.

In addition to 3D printing, many other examples of such a general
polling system can be found in computer-communication systems, reconfigurable
smart manufacturing systems and smart traffic systems \cite{levy1990polling,boon2011applications,miculescu2019polling,van2012polling,boon2012delays}.
In such systems, job arrivals are arbitrary, processing times of jobs
are revealed deterministically upon arrival, and a setup time occurs
when the server switches from one queue to another. We call such a
polling system ``general'' mainly because we do not impose any stochastic
assumptions on job arrivals or service times. Having a scheduling
policy that works well in such a general setting could prevent the
system from performing erratically when rare events occur. Knowing
the worst-case performance of a policy also aids in designing reliable
systems. There are very few studies discussing the optimal policies
or online scheduling algorithms for the polling system without stochastic
assumptions due to the complexity of analysis \cite{vishnevskii2006mathematical}.
It is still unknown if those scheduling policies designed for specific
polling systems work well in the general setting. In our paper, we
study completion time minimization for the general polling system
from an online optimization perspective and obtain the worst-case
performance (i.e., the competitive ratios) of several widely used
scheduling policies with known long-run average waiting time performance,
such as cyclic exhaustive and gated policies. We also, for the first
time, provide conditions for the existence of constant competitive
ratios for online policies in polling systems. Our work bridges the
scheduling and queueing communities by showing that some queueing
policies that work well under stochastic assumptions also work well
in the general scheduling setting. 

\subsection{Problem Statement}

As mentioned earlier, unprocessed jobs from the same prototype (or
family) can be modeled as a queue. We thus consider a single server
system with $k$ parallel queues, as in Figure \ref{fig:Polling-System}.
The processing time $p_{i}$ of the $i^{th}$ arriving job is revealed
instantly upon its arrival time $r_{i}$. The server can serve the
jobs that are waiting in queues in any order non-preemptively. However,
a setup time $\tau$ is incurred when the server switches from one
queue to another. Information $(r_{j},p_{j})$ about a future job
$j$ (for all $j$) remains unknown to the server until job $j$ arrives
in the system. The objective is to find the service order for jobs
and queues to minimize total completion time over all jobs, where
the completion time of a job is the time period from time 0 to the
time when the job has been served (exits the system). It is assumed
that the total number of jobs is an arbitrary large finite quantity. 

\begin{figure}
\includegraphics[scale=0.5]{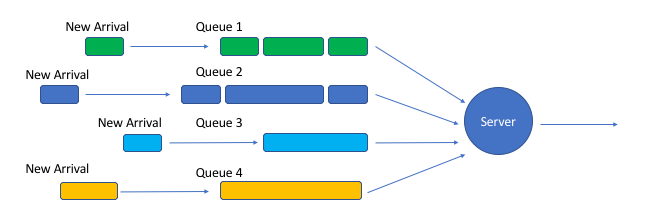}

\caption{Polling System with Four Queues \label{fig:Polling-System}}
\end{figure}

\subsection{Preliminaries}

In this subsection we mainly introduce some concepts and terminologies
that we use in this paper. Important notations in this paper are provided
in Table \ref{tab:Notations}.

\subsubsection*{Machine Scheduling Problems}

Using the notation for one machine scheduling problems \cite{graham1979optimization},
we write our problem as $1\mid r_{i},\tau|\sum C_{i}$, where ``$1$''
refers to the single machine in the system, $r_{i}$ and $\tau$ in
the middle field refer to the release date and setup time constraints,
and $\sum C_{i}$ in the last field is the completion time objective.
In the later part of this paper we will introduce other constraints,
which will appear in the middle field, such as $\tau\leq\theta p_{min}$
and $p_{max}\leq\gamma p_{min}$, where $p_{max}=\sup_{i}p_{i}$ is
the processing time upper bound and $p_{min}=\inf_{i}p_{i}$ is the
processing time lower bound. We say the processing time variation
is bounded if $p_{max}\leq\gamma p_{min}$ for constant $\gamma$.
If no constraint is specified in this field, it means 1) all jobs
are available at time zero, 2) no precedence constraints are imposed,
3) preemption is not allowed, and 4) no setup time exists. In this
paper, we assume jobs and setup times are non-resumable and all the
policies that we discuss are non-preemptive, unless we specifically
state otherwise. Other objectives are $C_{max},$ the maximal completion
time or makespan \cite{lenstra1990approximation} and $\sum w_{i}C_{i},$
the weighted completion time \cite{hall1997scheduling}. 

\subsubsection*{Online and Offline Problems}

A job instance $I$ with $n(I)$ number of jobs is defined as a sequence
of jobs with certain arrival times and processing times, i.e., $I=\{(r_{i},p_{i}),1\leq i\leq n(I)\}$.
In this work we mainly focus on an online scheduling problem, where
$r_{i}$ and $p_{i}$ remain unknown to the server until job $i$
arrives. In contrast to the online problem, the offline problem has
entire information for the job instance $I$, i.e., $(r_{1},r_{2},...,r_{n(I)})$
and $(p_{1},p_{2},...,p_{n(I)})$ from time 0. The offline problem
is usually of great complexity. The offline version of our online
problem, i.e., $1|r_{i},\tau|\sum C_{i}$, is strongly NP-hard since
the offline problem with $\tau=0$ is strongly NP-hard \cite{lawler1993sequencing,hall1997scheduling,kan2012machine}.
However it is important to note that if preemption is allowed, then
\emph{Shortest Remaining Processing Time} (SRPT) is the optimal policy
for $1|r_{i},pmtn|\sum C_{i}$ with $\tau=0$ (see \cite{smith1978new}).
SRPT is online and polynomial-time solvable. It can be used as a benchmark
for online scheduling policies.

\subsubsection*{Scheduling Policies}

A scheduling policy $\pi$ specifies when the server should serve
which job. In our paper we mainly focus on online (or non-anticipative)
policies \cite{wierman2012tail,bansal2018achievable}. Online policies
can be either \emph{deterministic }or \emph{randomized.} For example,
SRPT is a deterministic policy. A randomized policy may toss a coin
before making a decision, and the decision may depend on the outcome
of this coin toss \cite{bansal2018achievable}. Detailed discussions
for randomized algorithms can be found in \cite{stougie2002randomized,chekuri2001approximation}.
In this paper, we focus on deterministic policies.

\begin{table}
\begin{tabular}{|>{\centering}m{2cm}|>{\centering}m{5cm}||>{\centering}m{3cm}|>{\centering}m{5cm}|}
\hline 
Notation & Meaning & Notation & Meaning\tabularnewline
\hline 
\hline 
$r_{i}$ & Release date, the time when job $i$ arrives in the system & $p_{i}$ & Processing time for job $i$\tabularnewline
\hline 
$p_{max}$ & Maximum processing time, $p_{max}=\max_{i}\{p_{i}\}$ & $p_{min}$ & Minimum processing time, $p_{min}=\min_{i}\{p_{i}\}$\tabularnewline
\hline 
$\tau$ & Setup time is $\tau$ for all queues & $I=\{r_{i},p_{i}\}$, for $1\leq i\leq n(I)$ & Job instance, set of jobs with information about release date and
processing time for all jobs in it\tabularnewline
\hline 
$C^{\pi}(I)$ & Total completion time for jobs in instance $I$ under policy $\pi$ & $C^{*}(I)$ & Total completion time for jobs in instance $I$ under the optimal
policy of the offline problem\tabularnewline
\hline 
$C_{i}^{\pi}$ & The completion time for job $i$ under policy $\pi$ & $n(I)$ & Number of jobs in job instance $I$\tabularnewline
\hline 
$\gamma$ & Processing time variation, a constant such that $p_{max}\leq\gamma p_{min}$ & $\theta$ & A constant such that $\tau\leq\theta p_{min}$\tabularnewline
\hline 
$k$ & Number of queues in the system & $\kappa=\max\{\frac{k+1}{k}\gamma,k+1\}$ & Competitive ratio for cyclic-based and exhaustive-like policies, a
constant\tabularnewline
\hline 
\end{tabular}

\caption{Notations \label{tab:Notations}}
\end{table}

\subsubsection*{Competitive Ratios}

A competitive ratio is the ratio between the solution obtained by
an online policy and the \emph{benchmark}. In this paper, the optimal
solution to the offline problem is our benchmark. We say the competitive
ratio for an online policy is $\rho$ if $\sup_{I}\frac{C^{\pi}(I)}{C^{*}(I)}\leq\rho$,
where $C^{\pi}(I)$ is the completion time for job instance $I$ by
a deterministic scheduling policy $\pi$ and $C^{*}(I)$ is the optimal
completion time of the offline problem. We say a competitive ratio
is tight if there exists an instance $I$ such that $\frac{C^{\pi}(I)}{C^{*}(I)}=\rho.$ 

\subsection{Related Work}

Single-machine scheduling problems with setup times or costs have
been widely studied from various perspectives. A detailed review of
the literature can be found in \cite{altman1992elevator,allahverdi2008survey,allahverdi2015third,allahverdi2008significance,ruiz2010hybrid}.
Other studies considering machine setup can be found in \cite{vallada2011genetic,hinder2017novel,ng2003single,mosheiov2005minimizing,shen2018solving}.
However, these papers focus on solving the offline problem where all
the release times and processing times are revealed at time $0$. 

Numerous research papers have shed light on the online single-machine
scheduling problems, without considering setup times. Table \ref{tab:Competitive-Ratios-for-Single-Machine}
summarizes the current state of art of competitive-ratio analysis
over existing online algorithms for these problems. In addition, a
recent paper \cite{lubbecke2016new} provides a new method to approximate
the competitive ratio for general online algorithms. 

\begin{table}
\begin{tabular}{|l|c|c|c|c|}
\hline 
\multirow{2}{*}{Problem} & \multicolumn{2}{c|}{Deterministic} & \multicolumn{2}{c|}{Randomized}\tabularnewline
\cline{2-5} \cline{3-5} \cline{4-5} \cline{5-5} 
 & Lower Bounds & Upper Bounds & Lower Bounds & Upper Bounds\tabularnewline
\hline 
\hline 
$1|r_{i},pmtn|\sum C_{i}$ & 1 & 1 \cite{smith1978new} & 1 & 1 \cite{smith1978new}\tabularnewline
\hline 
$1|r_{i},pmtn|\sum w_{i}C_{i}$ & 1.073 \cite{epstein2003lower} & 1.566 \cite{sitters2010competitive} & 1.038 \cite{epstein2003lower} & $\frac{4}{3}$ \cite{schulz2002power}\tabularnewline
\hline 
$1|r_{i}|\sum C_{i}$ & 2 \cite{hoogeveen1996optimal} & 2 \cite{hoogeveen1996optimal,lu2003class,phillips1998minimizing} & $\frac{e}{e-1}$ \cite{stougie2002randomized} & $\frac{e}{e-1}$ \cite{chekuri2001approximation}\tabularnewline
\hline 
$1|r_{i}|$$\sum w_{i}C_{i}$ & 2 \cite{hoogeveen1996optimal} & 2 \cite{anderson2004online} & $\frac{e}{e-1}$ \cite{stougie2002randomized} & 1.686 \cite{goemans2002single}\tabularnewline
\hline 
$1|r_{i},\frac{p_{max}}{p_{min}}\leq\gamma|\sum w_{i}C_{i}$ & $1+\frac{\sqrt{4\gamma^{2}+1}-1}{2\gamma}$ \cite{tao2010optimal} & $1+\frac{\sqrt{4\gamma^{2}+1}-1}{2\gamma}$ \cite{tao2010optimal} & Unknown & Unknown\tabularnewline
\hline 
$1|r_{i},\frac{p_{max}}{p_{min}}\leq\gamma|\sum C_{i}$ & Numerical \cite{tao2010semi} & $1+\frac{\gamma-1}{1+\sqrt{1+\gamma(\gamma-1)}}$ \cite{tao2010semi} & Unknown & Unknown\tabularnewline
\hline 
\end{tabular}\caption{Competitive Ratios for single-machine Scheduling Problem without Setup
Times (i.e., $\tau=0$)\label{tab:Competitive-Ratios-for-Single-Machine}}
\end{table}

The online makespan minimization problem for the polling system is
considered in \cite{divakaran2011online} and an $\boldsymbol{O}(1)$
policy is proved to exist. However, the competitive ratio provided
in \cite{divakaran2011online} is very large. A 3-competitive online
policy for the polling system with minimizing the completion time
is provided in \cite{zhang2016online}, but only for the case where
$k=2$ and jobs are identical. To the best of our knowledge, online
policies for general polling systems with setup time consideration
have not been well studied.

As we mentioned before, there are articles that study polling systems
from a stochastic perspective by assuming job arrivals, service times
and setup times are stochastic. Long-run average waiting time, queue
length and other metrics of policies are considered for different
types of stochastic assumptions \cite{vishnevskii2006mathematical}.
Exact mean waiting time analysis for cyclic routing policies with
exhaustive, gated and limited service disciplines for polling system
of $M/G/1$ type queues have been provided in \cite{ferguson1985exact,takagi1988queuing,sarkar1989expected,winands2006mean,van2007iterative,gautam2012analysis}.
Service disciplines within queues (such as FCFS, SRPT and others)
are discussed in \cite{wierman2007scheduling}, but routing disciplines
are not discussed. The optimal service policy for the symmetric polling
system is provided by \cite{liu1992optimal}, where ``symmetric''
means that queues and jobs are stochastically identical. However,
the optimal solution for the general polling system remains unknown
\cite{boon2011applications,vishnevskii2006mathematical}. Approximating
algorithms for the polling system are very few, and none of those
widely-used policies have been shown to work well in general settings. 

The contribution of this paper is fourfold: (1) Our work for the first
time analyzes polling systems without stochastic assumptions, evaluates
policy performance by competitive ratios, provides the conditions
for existence of competitive ratios in polling systems, and proves
competitive ratios for some well-studied policies such as cyclic exhaustive
and gated policy. (2) Our work bridges the queueing and scheduling
communities by showing that some widely-used queueing policies also
have decent performance in terms of the worst-case performance in
online scheduling problems. (3) We provide a lower bound for the competitive
ratio for general online policies. (4) We also provide new online
policies that balance future uncertainty and utilize known information,
which may open up a new research direction that would benefit from
revealing information and reducing variability. This paper is organized
as follows: in Section \ref{sec:General-Results} we provide some
general results for scheduling policies in polling systems; in Section
\ref{sec:Polling-System-Large-Setup } we consider policies which
are based on a cyclic routing discipline; in Section \ref{sec:Polling-System-Small}
we consider policies which are processing-time based; we make concluding
remarks in Section \ref{sec:Concluding-Remarks-and}.

\section{\label{sec:General-Results}General Results}

A polling system with zero setup time is the most basic single server
system, in which the server only needs to decide when to serve which
job, without considering the queue indices. From \cite{smith1978new,gittins2011multi}
we know that the optimal online policy is SRPT in this case. The reason
is that by doing this, small jobs are quickly processed, thus the
number of waiting jobs is reduced. However, it becomes problematic
if setup time is non-zero. If a small job arrives at the queue which
requires a setup, then processing this small job before other jobs
may not be beneficial any more. Deciding which queue to set up and
which job to process first thus become challenging in such a polling
system.

In the queueing literature, there are many policies which are defined
by \emph{service disciplines} and \emph{routing discipline}s. The
service discipline of each policy determines the time to switch out
from a queue and the order to serve jobs. The routing discipline determines
the queue to serve next, when the server switches out from a queue.
We first introduce some special routing disciplines which are widely
studied in the literature. 
\begin{itemize}
\item Static disciplines: The server visits each queue (could be empty or
non-empty) following a static routing table, i.e., a predetermined
routing order (see \cite{konheim1994descendant,boxma1993efficient}).
\emph{Cyclic} is one of the most basic static routing disciplines,
under which the server would visit queues one by one, and returns
to the first queue once a cycle has completed. 
\item Random disciplines: The server visits each queue (could be empty or
non-empty) in a random manner (see \cite{konheim1994descendant,chung1994performance,jan2014two}). 
\item Purely queue-length based disciplines: An example is \emph{Stochastic
Largest Queue }(SLQ) policy which is defined by \cite{liu1992optimal},
where the server under SLQ would choose the largest queue to route
to when switching occurs.
\item Purely processing-time based disciplines: The next queue to visit
could be based on the minimal, maximal, total or average processing
time of jobs waiting in the queue. An example is \emph{Shortest Processing
Time} (SPT) policy (see \cite{wierman2007scheduling}), without considering
queue indices.
\end{itemize}
Interestingly, these disciplines have the same competitive ratio lower
bound, of $k$, the number of queues.
\begin{thm}
\label{thm:The-sufficient-conditions}If an online policy has routing
discipline that is: static, random, purely queue-length based, or
purely processing-time based, then this online policy cannot have
a constant competitive ratio that is smaller than $k$ on $1|r_{i},\tau|\sum C_{i}$.
\end{thm}

\begin{proof}
We assume policy $\pi_{1}$ has a routing discipline that is static,
random, or processing time based. We give an instance $I$ where there
is one job at queue $i=1,...,k-1$ at time 0, and $n$ jobs at queue
$k$. Each job has processing time $0$. If $\pi_{1}$ is based on
a static discipline, we assume the routing order is from queue 1 to
queue $k$; if $\pi_{1}$ is based on random routing discipline, we
assume the realization of the random service order is from queue 1
to queue $k;$ if $\pi_{1}$ is purely processing-time based, it treats
all queues equally since the minimal, maximal, total, and average
processing times for all queues are equal, and we again let the server
serve from queue 1 to queue $k$. So the server has to set up $k$
times before reaching queue $k$, and we have

\begin{eqnarray*}
C^{\pi_{1}}(I) & \geq & \tau\left((n+k-1)+(n+k-2)+...+n\right),
\end{eqnarray*}
 and 
\begin{eqnarray*}
C^{*}(I) & = & \tau\left((n+k-1)+(k-1)+...+1\right).
\end{eqnarray*}

Letting $n\rightarrow\infty$, we have $\frac{C^{\pi_{1}}(I)}{C^{*}(I)}\geq k$. 

Now we show the lower bound holds for a policy $\pi_{2}$ that is
purely queue-length based. Suppose at time 0, each queue $i=2,...,k$
has one job with processing time 0, and queue 1 has one job with processing
time $p.$ Suppose the server under $\pi_{2}$ serves from queue 1
to queue $k,$ which is consistent with queue-length based routing.
Then we have 
\begin{eqnarray*}
C^{\pi_{2}}(I) & \geq & (k-1)p+p+\frac{k(k+1)}{2}\tau,
\end{eqnarray*}
and 
\begin{eqnarray*}
C^{*}(I) & = & p+\frac{k(k+1)}{2}\tau.
\end{eqnarray*}

Letting $p\rightarrow\infty$, we have $\frac{C^{\pi_{2}}(I)}{C^{*}(I)}\geq k.$
\end{proof}
It is important to point out that Theorem \ref{thm:The-sufficient-conditions}
provides a competitive ratio bound for policies with some special
routing disciplines, but the service discipline for these policies
could be arbitrary. We will discuss more about policies that are designed
based on service and routing disciplines in Section \ref{sec:Polling-System-Large-Setup }.

To reduce setup frequency, a routing discipline may want to avoid
setting up empty queues, although there may be arrivals during setup
times. Also, an efficient service discipline may prevent the server
from idling when there are unfinished jobs in the system. We thus
consider work-conserving policies, under which the server would never
idle or set up empty queues when the system is non-empty. The next
theorem shows that there exists a constant competitive ratio for all
non-preemptive work-conserving policies, under certain conditions. 
\begin{thm}
\label{thm:Any-non-preemptive-work-conservi}Any non-preemptive work-conserving
policy on the polling system with $p_{max}\leq\gamma p_{min}$ and
$\tau\leq\theta p_{min}$ is at least $(\gamma+\theta)$-competitive
with respect to the optimal solution to the offline problem.
\end{thm}

\begin{proof}
Let $I$ be an arbitrary instance, and let $\hat{I}$ be the processing
time and setup time augmented instance of $I$ such that all job processing
times in $\hat{I}$ are $\hat{p}=p_{max}+\tau\leq(\gamma+\theta)p_{min}$,
setup time is 0 and arrival times in $\hat{I}$ are the same as $I$.
Note that any non-preemptive work-conserving policy on $\hat{I}$
is optimal since processing times for jobs in $\hat{I}$ are identical,
and $\tau=0$. Let $\sigma$ be a non-preemptive work-conserving policy
on $I$, and $\hat{\sigma}$ be a policy that works on $\hat{I}$
and serves jobs in the same order as $\sigma$ does in $I$, without
idling. Then $\hat{\sigma}$ is work-conserving since $\sigma$ never
idles when there are unfinished jobs in system, and therefore $C^{\hat{\sigma}}(\hat{I})=C^{*}(\hat{I})$.
We now show $C^{*}(\hat{I})\leq(\gamma+\theta)C^{*}(I).$ Let $S_{i}^{*}$
and $C_{i}^{*}$ be the starting and completion times of job $i$
in $I$ under the optimal solution. Let $\delta$ be a schedule that
works on $\hat{I}$ and finishes each job $i$ at time $(\gamma+\theta)C_{i}^{*}$,
we then have $C^{\delta}(\hat{I})=(\gamma+\theta)C^{*}(I).$ We now
show that $\delta$ is a feasible schedule on $\hat{I}.$ Notice that
schedule $\delta$ starts serving job $i$ in $\hat{I}$ at time $\hat{S}_{i}^{\delta}=(\gamma+\theta)C_{i}^{*}-\hat{p}=(\gamma+\theta)(S_{i}^{*}+p_{i})-\hat{p}\geq(\gamma+\theta)S_{i}^{*}\geq r_{i}$.
Since $S_{i}^{*}\ge C_{i-1}^{*},$ we also have $(\gamma+\theta)S_{i}^{*}\geq(\gamma+\theta)C_{i-1}^{*}$.
Therefore $\hat{S}_{i}^{\delta}\geq\max\{r_{i},(\gamma+\theta)C_{i-1}^{*}\},$
by induction we can show that $\delta$ is a feasible schedule on
$\hat{I}$. In summary, we have

\begin{eqnarray*}
 &  & C^{\sigma}(I)\leq C^{\hat{\sigma}}(\hat{I})=C^{*}(\hat{I})\leq C^{\delta}(\hat{I})=(\gamma+\theta)C^{*}(I).
\end{eqnarray*}
\end{proof}
Note that Theorem \ref{thm:Any-non-preemptive-work-conservi} holds
only when $p_{max}\leq\gamma p_{min}$ and $\tau\leq\theta p_{min}$.
If either inequality does not hold, we may need other scheduling policies
to achieve constant competitive ratios, which we will discuss in Section
\ref{sec:Polling-System-Large-Setup } and Section \ref{sec:Polling-System-Small}. 

\section{Scheduling Policies with Cyclic Routing \label{sec:Polling-System-Large-Setup }}

In this section we focus on scheduling policies with static routing
disciplines. While these policies are easy to implement, they cannot,
e.g., prioritize small jobs. To characterize the influence made by
job processing time variation, in this section we assume $p_{max}\leq\gamma p_{min}$.
When $\gamma$ is small, all jobs have similar processing times. Unlike
in \cite{tao2010semi} where $p_{min}$ is assumed to be non-zero,
here we also allow $p_{min}=p_{max}=0$, for which we define $\gamma$
to be 1.

In Theorem \ref{thm:The-sufficient-conditions} we have shown that
the competitive ratio for a policy with a static or random routing
discipline is at least $k$, for the system $1|r_{i},\tau|\sum C_{i}$.
Even with additional assumption $p_{max}\leq\gamma p_{min}$, we show
in the following theorem that the competitive ratio for a policy with
a static or random routing discipline is still lower bounded by $k$.
In this case, cyclic routing is the only static discipline that can
achieve this lower bound.
\begin{thm}
\label{thm:Lower-Bound} No online policy with a static or random
routing discipline can guarantee a competitive ratio smaller than
$k$ for $1|r_{i},\tau,p_{max}\leq\gamma p_{min}|\sum C_{i}$, and
cyclic routing  is the only static routing discipline that can achieve
this lower bound. 
\end{thm}

\begin{proof}
Since $1|r_{i},\tau,p_{i}=1|\sum C_{i}$ is a special case of $1|r_{i},\tau,p_{max}\leq\gamma p_{min}|\sum C_{i}$,
we here only need to show the lower bound $k$ holds for $1|r_{i},\tau,p_{i}=1|\sum C_{i}$.
For an arbitrary policy that follows a static routing discipline,
we suppose the server starts from queue 1, and queue $k$ is the last
one visited. Before visiting queue $k$ for the first time, the server
visits queue $i$ for $v_{i}$ times $(1\leq i\leq k-1)$. We construct
a special job instance $I$ by assuming that there is one job arriving
at each queue $i$ every time the server visits queue $i$ for $i=1,...,k-1$.
Also we suppose there are $n_{k}$ jobs at queue $k$ at time 0 for
a large $n_{k}$, and say they form a batch $b_{k}$. If we let $g(b_{k})=\frac{n_{k}(n_{k}+1)}{2},$
then we have

\begin{eqnarray*}
C^{\pi}(I\cup b_{k}) & \geq & C^{\pi}(I)+n_{k}\tau(\sum_{i=1}^{k-1}v_{i}+1)+g(b_{k}),
\end{eqnarray*}

\begin{eqnarray*}
C^{*}(I\cup b_{k}) & \leq & C^{\pi}(I)+n_{k}\tau+g(b_{k})+n(I)(n_{k}+\tau),
\end{eqnarray*}
and $\frac{C^{\pi}(I\cup b_{k})}{C^{*}(I\cup b_{k})}\geq\sum_{i=1}^{k-1}v_{i}+1$
if we let $\tau=(n_{k})^{2}$ and $n_{k}\rightarrow\infty$. Since
for any static or random routing discipline we can construct a job
instance like this, to achieve the smallest ratio $\sum_{i=1}^{k-1}v_{i}+1$,
we need $v_{i}=1$ for $i=1,...,k-1$. Therefore $\frac{C^{\pi}(I\cup b_{k})}{C^{*}(I\cup b_{k})}\geq k$
and only cyclic routing  can achieve this lower bound. 
\end{proof}
Theorem \ref{thm:Lower-Bound} shows the advantage of cyclic routing
 over the other static and random routing disciplines for $1\mid r_{i},\tau,p_{max}\leq\gamma p_{min}\mid\sum C_{i}$.
We now focus our discussion on service disciplines at each queue assuming
cyclic routing. We first discuss a service discipline called \emph{exhaustive},
under which the server serves all the jobs in a queue before switching
out. We show that the exhaustive service has a competitive advantage
over the other disciplines when all job processing times are identical. 
\begin{prop}
\label{lem:Exhaustive is better}For the polling system $1|r_{i},\tau,p_{min}=p_{max}|\sum C_{i}$,
there always exists an exhaustive discipline whose competitive ratio
is at least as small as a non-exhaustive discipline.
\end{prop}

\begin{proof}
The case where $p_{i}=0$ is trivial. Now we let $p_{i}=1$ with appropriate
units. Since preemption is not allowed and all the jobs are identical,
the server only needs to decide when to switch out and which queue
to set up next. If there are jobs in the queue that the server is
currently serving, there are two options for the server: to continue
serving the next job in this queue, or to switch to a queue and later
come back to this queue again. If at time 0 the server is at queue
1 and there is an unfinished job in queue 1, then the server has to
come back after it switches out. Suppose under a non-exhaustive policy
$\pi^{'}$, the server chooses to switch to some queue(s) and come
back to queue 1 at time $T$. Say the server serves instance $I^{'}$
during this period $T$. Suppose there is an adversary policy which
has the same instance at time 0, and this policy chooses to serve
one more job in queue 1, and then follows all decisions that policy
$\pi^{'}$ has made (including waiting). Note that every decision
policy $\pi^{'}$ made is available to the adversary policy because
the adversary policy serves one more job before leaving queue 1. The
total completion time under $\pi^{'}$ is $C^{\pi^{'}}=1+T+C(I^{'})$,
and the completion time achieved by the adversary is $C^{ad}=1+C(I^{'})+n(I^{'})$,
where $C(I^{'})$ is the completion time of instance $I$ served by
policy $\pi^{'}$ during $(0,T]$. We then have $C^{ad}-C^{\pi^{'}}\leq n(I^{'})-T\leq0$.
Notice that the makespan of these two schemes are the same (including
the final setup time of queue 1 for the adversary). Then by induction,
we can show that there always exists an exhaustive discipline that
can achieve a smaller total completion time than a non-exhaustive
one.
\end{proof}
Theorem \ref{thm:Lower-Bound} and Proposition \ref{lem:Exhaustive is better}
motivate us to consider scheduling policies with cyclic routing and
exhaustive service. Define a set of policies $\Pi_{1}$ as those policies
under which the server 1) serves queues in a cyclic way and skips
empty queues when switching, 2) serves each queue exhaustively, 3)
stays in queue $i$ for at most $n_{i}^{w}p_{max}$ amount of time
before switching out, where $n_{i}^{w}$ is the number of jobs processed
in the server's $w^{th}$ visit to queue $i$, and 4) idles at the
last queue that it served if all queues are empty. Note that after
serving all the jobs in queue $i$, the server is allowed to wait
in queue $i$ for some extra time to receive more arrivals. If a new
arrival occurs at queue $i$ during the time that the server is waiting,
then $n_{i}^{w}\leftarrow n_{i}^{w}+1$ and the server can process
this job at any time before it switches out, as long as the server
does not stay in the queue for time longer than the updated $n_{i}^{w}p_{max}$.
If during the $w^{th}$ visit to queue $i$, the processing time for
each job in the queue is $p_{max}$, then the server will switch out
immediately once a queue is exhausted. Also note that we do not specify
the service order of jobs for policies in $\Pi_{1}$. 

Notice that the policies in $\Pi_{1}$ does not set up empty queues,
so the routing discipline for these policies needs to utilize the
queue information. We next consider policies which do not require
any queue information, so the server under these policies would set
up each queue even if the queue is empty before setup is initiated.
Let the set of policies $\Pi_{2}$ be defined as for $\Pi_{1}$ except
that the server sets up regardless of whether the queue is empty or
not. It is important to point out that the policy without waiting
(just exhaustively serving) belongs to $\Pi_{2}$, and the long-run
average waiting time and queue length under this policy have been
extensively studied for $M/G/1$ type queues \cite{ferguson1985exact,takagi1988queuing,sarkar1989expected,winands2006mean}. 

Another widely studied service discipline is called \emph{gated}.
Under a gated discipline, the server only serves the jobs that have
arrived in the queue before the server finishes setting up the queue,
and jobs that arrive after that time will be left to the next cycle
of service. We now let $\Pi_{3}$ be the set of policies with cyclic
routing with skipping the empty queues, and that serves each queue
with a gated discipline. Similar to policy set $\Pi_{1},$ we do not
specify the service order for $\Pi_{3}$ either. Once the server has
set up queue $i$, the number of jobs at that time, $n_{i}^{w}$,
will be served during this visit. We also allow the server to wait
after clearing a queue under $\Pi_{3}$, and the maximal time for
staying in queue $i$ in the $w^{th}$ visit is bounded by $n_{i}^{w}p_{max}$.
Similarly, we can have a policy set $\Pi_{4}$ in which policies are
cyclic and gated, without skipping empty queues when switching. We
do not provide the detailed description of $\Pi_{4}$ here since it
is similar to policy set $\Pi_{3}.$ The long-run average queue length
and waiting time under the gated discipline for $M/G/1$ type queues
are also provided in \cite{takagi1988queuing,winands2006mean}, if
the server does not skip empty queues when switching.

So far we have introduced four policy sets, and we let $\Pi_{r}=\cup_{i=1}^{4}\Pi_{i}.$
In the next theorem we show that all the policies in $\Pi_{r}$ have
the same competitive ratio for problem $1\mid r_{i},\tau,p_{max}\leq\gamma p_{min}\mid\sum C_{i}$.
\begin{thm}
\label{thm:Gated}Any policy $e\in\Pi_{r}$ has the competitive ratio
$\kappa=\max\{\frac{k+1}{k}\gamma,k+1\}$ for the polling system $1\mid r_{i},\tau,p_{max}\leq\gamma p_{min}\mid\sum C_{i}$.
When $\gamma\leq k$, for arbitrary $\epsilon>0$, there is an instance
$I$ such that $\frac{C^{e}(I)}{C^{*}(I)}>\kappa-\epsilon=k+1-\epsilon$.
\end{thm}

\begin{proof}
The detailed proof in given in the Appendix \ref{sec:Proof-for-Theorem}.
\end{proof}
From Theorem \ref{thm:Lower-Bound} we know that the competitive ratio
based on static or random routing is at least $k$, and Theorem \ref{thm:Gated}
shows that any policy from $\Pi_{r}$ has an approximately tight competitive
ratio $k+1$ if $\gamma\leq k$. This indicates that policies from
$\Pi_{r}$ are nearly optimal among all the policies that are based
on static or random routing disciplines. It is also important to point
out that although gated service disciplines are different from exhaustive
disciplines, one can also regard a gated discipline as an exhaustive-like
discipline, as the server under a gated discipline would exhaust the
jobs that have arrived in the previous cycle. The policies in $\Pi_{r}$
have a constant competitive ratio for $1|r_{i},\tau,p_{max}\leq\gamma p_{min}|\sum C_{i}$
because of this exhaustive-like service discipline, since the server
under a policy from $\Pi_{r}$ would serve as many jobs as possible
before switching out. Some other cyclic policies without using an
exhaustive-like service may not have constant competitive ratios for
$1|r_{i},\tau,p_{max}\leq\gamma p_{min}|\sum C_{i}$. We now consider
a policy called the \emph{$l$-limited} policy. This policy is also
based on the cyclic routing. However, the server under the $l$-limited
policy only serves at most $l$ jobs during each visit to a queue.
A detailed description for this policy and the long-run average waiting
time under this policy for $M/G/1$ type queues can be found in \cite{takagi1988queuing,gautam2012analysis,van2007iterative}.
Interestingly, as we shall show in Proposition \ref{thm:The-l-limit},
no constant competitive ratio is guaranteed by the $l$-limited policy,
regardless whether the server skips empty queues or not.
\begin{prop}
\label{thm:The-l-limit}The $l$-limited policy ($l<\infty$, with
or without skipping empty queues) does not have a constant competitive
ratio for $1\mid r_{i},\tau,p_{max}\leq\gamma p_{min}\mid\sum C_{i}$.
\end{prop}

\begin{proof}
We prove this result by giving a special instance $I$. Suppose there
are $(l*n)$ number of jobs $(l,n\in\mathbf{Z_{+})}$ at every queue
at time $0$, and each job has processing time $p=1$. At each queue
the server sets up and serves $l$ jobs, and this is repeated $n$
times. Let $C^{l}(I)$ be the total completion time for the \emph{l-}limited
policy (either with or without skipping empty queues), we have 
\begin{eqnarray*}
\frac{C^{l}(I)}{C^{*}(I)} & = & \frac{\frac{knl(knl+1)}{2}+\tau l\frac{kn(kn+1)}{2}}{\frac{knl(knl+1)}{2}+\tau nl\frac{k(k+1)}{2}}.
\end{eqnarray*}
If we let $\tau=(n)^{2}$ and $n\rightarrow\infty$, then $\frac{C^{l}(I)}{C^{*}(I)}\rightarrow\infty$.
\end{proof}
Proposition \ref{thm:The-l-limit} shows that the \emph{$l$-}limited\emph{
}policy, which does not belong to $\Pi_{r},$ does not have a constant
competitive ratio\emph{ }for $1\mid r_{i},\tau,p_{max}\leq\gamma p_{min}\mid\sum C_{i}$.
Theorem \ref{thm:Gated} and Proposition \ref{thm:The-l-limit} show
the advantage of exhaustive-like service disciplines. However, policies
in $\Pi_{r}$ also have their limitations. We next show that policies
in $\Pi_{r}$ do not have constant competitive ratios if $p_{min}=0<p_{max}$
(so $\gamma$ is infinity).
\begin{thm}
\label{thm:Round-Polices-Large-Workload}Policies in $\Pi_{r}$ do
not guarantee constant competitive ratios for $1|r_{i},\tau|\sum C_{i}$.
\end{thm}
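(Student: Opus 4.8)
The plan is to show that for every policy $e\in\Pi_{r}$ and every constant $M$ there is an instance $I$ with $C^{e}(I)/C^{*}(I)>M$. The guiding idea is the intrinsic weakness of cyclic, exhaustive-like routing once the workload variation is unbounded: a single enormous job, to which the server is forced to commit, blocks it from reaching a large pool of tiny jobs waiting at another queue, whereas the offline optimum simply serves the tiny jobs first. This is precisely the situation that the constraint $p_{max}\le\gamma p_{min}$ ruled out in Theorems \ref{thm:Exhaustive-Round-Robin-policy-skipping}--\ref{thm:Gated}, and removing it reinstates the pathology.

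Concretely, I would take a family of instances $I_{n,L}$ indexed by an integer $n$ and a large workload $L$. Place a single job $J_{0}$ of size $L$ at queue $1$ and $n$ jobs of size $1$ at queue $2$, all released at time $0$, leaving queues $3,\dots,k$ empty. Every policy in $\Pi_{r}$ starts at queue $1$ and serves it in an exhaustive-like manner, and queue $1$ holds only $J_{0}$; hence the server must process $J_{0}$ throughout $[0,L]$ before it can incur a setup $\tau$, switch to queue $2$, and clear the $n$ unit jobs. Taking the no-waiting behavior (which only lowers $C^{e}$ and is therefore the conservative choice), $J_{0}$ completes at time $L$ and the $j$-th unit job at $L+\tau+j$, so
\begin{equation*}
C^{e}(I_{n,L})\;\geq\;(n+1)L+n\tau+\tfrac{n(n+1)}{2}.
\end{equation*}
For the benchmark I exhibit the offline schedule that serves the $n$ unit jobs first and $J_{0}$ last, which yields
\begin{equation*}
C^{*}(I_{n,L})\;\leq\;L+n+\tau+\tfrac{n(n+1)}{2}.
\end{equation*}
The structural point is that the online cost carries a term $nL$ (each small job is delayed by the full processing of the giant job) while the offline cost contains $L$ only once. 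Holding $n$ fixed and letting $L\to\infty$ gives $C^{e}(I_{n,L})/C^{*}(I_{n,L})\to n+1$; choosing $n>M$ then forces the ratio above any prescribed $M$, so no constant competitive ratio can exist.

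The main obstacle is not the arithmetic but verifying that the construction defeats the \emph{entire} family $\Pi_{r}$ uniformly. I must argue that none of the freedoms available to a policy in $\Pi_{r}$ can let the server avoid committing to $J_{0}$ before reaching queue $2$: the within-queue service order is irrelevant at queue $1$ (a single job) and at queue $2$ (all unit jobs, so the sum of completion times is order-independent); the gated disciplines of $\Pi_{3}\cup\Pi_{4}$ behave identically here because $J_{0}$ is present when queue $1$ is set up and all $n$ unit jobs are present when queue $2$ is set up; skipping versus not skipping empty queues is immaterial since the cyclic order visits queue $2$ immediately after queue $1$; and any permitted waiting only inflates $C^{e}$ further. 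Establishing this robustness across the four policy subfamilies, together with the explicit offline schedule that certifies the upper bound on $C^{*}$, completes the argument.
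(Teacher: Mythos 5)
Your proposal is correct and takes essentially the same approach as the paper: an adversarial family in which large jobs trap the server under cyclic exhaustive/gated routing while many small jobs wait, with the workload ratio $p_{max}/p_{min}$ sent to infinity (the paper places the big jobs at queues $2,\dots,k$ and has $n$ zero-size jobs arrive at queue 1 at time $\tau+\epsilon$, whereas you place one big job at the server's starting queue with everything released at time 0 --- both defeat all four subfamilies of $\Pi_{r}$). The only blemish is that your upper bound on $C^{*}(I_{n,L})$ omits the $O(n\tau)$ delay incurred by the initial setup of queue 2 before the unit jobs are served, but since this term is independent of $L$ it does not affect the limit $n+1$ or the conclusion.
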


\begin{proof}
We prove the theorem by giving a special job instance $I$. We assume
$p_{min}=0$ and $p_{max}=p$ so that $\gamma=\infty$. Suppose at
time $0$ each of queue $i=2,...,k$ has one job with processing time
$p$ and queue 1 has no job. At time $\tau+\epsilon$ there are $n$
jobs arriving at queue 1, with each having processing time $0$. For
any policy $\pi\in\Pi_{r}$, the server would either setup queue 1
at time 0 then switch to queue 2 at time $\tau$, or setup queue $2$
at time 0. In either of the cases the server will be back to queue
1 when queue $k$ is served in the first cycle. Then we have

\begin{eqnarray*}
C^{\pi}(I) & \geq & \frac{k(k-1)}{2}p+n(k-1)p+\tau\left((n+k-1)+(n+k-2)+...+n\right),
\end{eqnarray*}

and

\begin{eqnarray*}
C^{*}(I) & = & \frac{k(k-1)}{2}p+\tau\left((n+k-1)+(k-1)+...+1\right)+\epsilon(n+k-1).
\end{eqnarray*}

Letting $p=(n)^{2}$ and $n\rightarrow\infty$, we have $\frac{C^{\pi}(I)}{C^{*}(I)}\rightarrow\infty.$
\end{proof}
Theorem \ref{thm:Round-Polices-Large-Workload} shows the limitation
of $\Pi_{r}$ when the condition $p_{max}\leq\gamma p_{min}$ is not
satisfied. Although the server under $\Pi_{r}$ can serve the jobs
within each queue following SPT (see\cite{wierman2007scheduling})
to achieve a smaller expected waiting time, the competitive ratio
remains $\kappa$. When the condition $p_{max}\leq\gamma p_{min}$
no longer holds for finite $\gamma$, one may need policies that utilize
the job processing time information to achieve a constant competitive
ratio. In the next section we will introduce some processing-time
based policies when $p_{max}\leq\gamma p_{min}$ does not hold.

\section{Scheduling Policies Based on Job Processing Times\label{sec:Polling-System-Small}}

In this section we mainly discuss policies in which service and routing
disciplines are based on job processing times. Service and routing
disciplines for these policies are based on job processing times only.
To better characterize the competitive ratio for these policies, in
this section we assume the setup time $\tau$ is bounded by a ratio
of the minimal processing time, that is $\tau\leq\theta p_{min}.$
If $\tau=p_{min}=0,$ we let $\theta=1.$ This setting in the 3D printing
example corresponds to the scenario where jobs of a different color
need to be printed, and the time to set up a new ink is bounded by
a constant factor of the minimum possible processing time. Using the
standard notation for scheduling problems (see \cite{graham1979optimization}),
we denote this polling problem as $1|r_{i},\tau\leq\theta p_{min}|\sum C_{i}$.
Notice that when the setup time is small, i.e., $\theta$ is small,
switching may not be the major contributor to the completion time.
Thus these processing-time based policies may be efficient when $\theta$
is small.

We first introduce a benchmark for deriving the competitive ratio
of our policies. Usually the competitive ratio $\rho$ is defined
by $\sup_{I}\frac{C^{\pi}(I)}{C^{*}(I)}\leq\rho$ where $C^{*}(I)$
is the completion time for $I$ in the offline optimal solution. The
offline problem is strongly NP-hard. To non-rigorously show the NP-hardness,
we know that if no preemption is allowed, even the easier problem
$1|r_{i}|\sum C_{i}$ (without setup time) is strongly NP-hard \cite{lawler1993sequencing,hall1997scheduling,kan2012machine}.
In this section we use a lower bound of the optimal solution as the
benchmark. To get a lower bound for the optimal solution, we introduce
the idea of setup time reduced instance. If instance $I$ is an arbitrary
job instance, then the setup time reduced instance of $I$, say $\utilde{I}$,
is an instance that has the same jobs (same arrival times and same
processing times) as $I$, but has no setup times. The optimal scheduling
policy for $\utilde{I}$ to minimize total completion times is SRPT
\cite{smith1978new}. This scheduling policy is also online, which
is handy for other online policies to emulate. The completion time
of instance $\utilde{I}$ under SRPT is denoted by $C^{p}(\utilde{I})$.
Since setup time does not exist in $\utilde{I}$, we have $C^{p}(\utilde{I})\leq C^{*}(I)$.
In this section, we only consider non-preemptive policies, but using
SRPT as the benchmark. When preemption is not allowed,\emph{ One Machine
}(OM) policy is known to be the optimal online scheduling policy for
$\utilde{I}$ \cite{phillips1998minimizing}. We now apply OM on instance
$I$ and prove its competitive ratio for polling systems. The description
of the OM policy is provided in Algorithm \ref{alg:One-Machine-Scheduling}.
Note that under OM, a job in $I$ can only be scheduled (started)
once it has been completed by SRPT on $\utilde{I}$. The competitive
ratio of OM is provided in Theorem \ref{thm:One-Machine}.

\begin{algorithm}
1. Simulate SRPT policy on the setup time reduced instance $\utilde{I}$.

2. Schedule the jobs non-preemptively in the order of completion time
of jobs by SRPT on $\utilde{I}$.

\caption{One Machine Scheduling (OM) \label{alg:One-Machine-Scheduling}}
\end{algorithm}

\begin{thm}
\label{thm:One-Machine}OM is a $(2+\theta)$-competitive online algorithm
for the polling system $1|r_{i},\tau\leq\theta p_{min}|\sum C_{i}$.
The competitive ratio is tight when using SRPT on the reduced instance
as the benchmark.
\end{thm}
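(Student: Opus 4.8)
The plan is to bound the One Machine (OM) policy's total completion time against the SRPT benchmark $C^p(\utilde{I})$ on the setup-time-reduced instance, using the fact that OM schedules jobs non-preemptively in the order of their SRPT completion times. The key observation is that in OM the jobs are processed one after another (with at most one setup $\tau$ inserted between consecutive jobs from different queues) in exactly the SRPT completion-time order, so I can account for the completion time of each job as a sum of: (i) the processing it inherits from the SRPT schedule, and (ii) the extra delay caused by non-preemption, and (iii) the cumulative setup times.

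Let me think carefully about the structure. First I would index the jobs $1,2,\dots,n$ in the order that SRPT completes them on $\utilde{I}$, so that OM processes them in this same order. Let $\sigma_i$ denote the SRPT completion time of job $i$ on $\utilde{I}$; by definition $C^p(\utilde{I})=\sum_i \sigma_i$. The core of the argument is the standard ``non-preemptive emulation of a preemptive schedule'' bound: since job $i$ is the $i$-th to be processed by OM and every job processed before it has already completed under SRPT by time $\sigma_i$ (as they complete earlier in SRPT order), the total processing OM has done by the time it starts job $i$ is at most the total work completed by SRPT up to $\sigma_i$, which is bounded by $\sigma_i$ itself. This gives the classical factor-of-$2$ slack: $C_i^{OM}\le \sigma_i + (\text{work of jobs completing no later than }i) \le 2\sigma_i$ in the setup-free accounting. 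The remaining step is to insert the setup times: between the at most $i$ jobs processed up to and including job $i$, OM incurs at most $i$ setups, contributing at most $i\tau$. Since $\tau\le\theta p_{min}\le\theta\sigma_i$ (each job's own SRPT completion time is at least its processing time, which is at least $p_{min}$), I would bound the total setup delay experienced by job $i$ by $\theta\sigma_i$ after summing carefully over the jobs ahead of it.

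\textbf{Assembling the bound.} Summing $C_i^{OM}\le 2\sigma_i + (\text{setup contribution})$ over all $i$ and showing the setup contribution sums to at most $\theta\,C^p(\utilde{I})=\theta\sum_i\sigma_i$, I would obtain
\begin{eqnarray*}
C^{OM}(I) \;=\; \sum_i C_i^{OM} \;\le\; (2+\theta)\sum_i \sigma_i \;=\; (2+\theta)\,C^p(\utilde{I}) \;\le\; (2+\theta)\,C^*(I),
\end{eqnarray*}
where the last inequality is $C^p(\utilde{I})\le C^*(I)$, already established in the excerpt. For tightness against the SRPT benchmark, I would exhibit an instance forcing the ratio $C^{OM}(I)/C^p(\utilde{I})$ arbitrarily close to $2+\theta$: the factor $2$ is tight already for OM on a setup-free single machine (as in \cite{phillips1998minimizing}), and I would overlay a construction in which nearly every transition between consecutively processed jobs crosses queues, so that the setup overhead accumulates at the full rate $\theta p_{min}$ per job relative to the minimal-workload jobs that dominate the benchmark.

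\textbf{The hard part} will be the bookkeeping for the setup contribution, specifically showing that the $i\tau$ naive bound on setups-before-job-$i$ does not blow up when summed over all $i$ (which would give a $\theta\,n$-type term rather than a clean $\theta\,C^p(\utilde{I})$). The resolution is to charge each single setup not to all the jobs after it but to the one job it immediately precedes, so that the total setup time is at most $n\tau$ in the \emph{makespan} sense, and then to align this with the per-job completion accounting so that each job $i$ absorbs only its own $\tau\le\theta\sigma_i$ worth of delay rather than the delay of all prior setups. Getting this amortization exactly right, and in particular confirming that each job contributes setup delay bounded by $\theta\sigma_i$ rather than $\theta\cdot(\text{position})$, is where the careful argument lies; the $2$-factor portion is a direct transcription of the known OM analysis for $\utilde{I}$.
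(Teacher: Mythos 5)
Your overall approach is the same as the paper's: compare job $j$ (indexed by SRPT completion order on $\utilde{I}$) pointwise, establishing $C_j^{o}\le(2+\theta)C_j^{p}$ and then summing, using $C^{p}(\utilde{I})\le C^{*}(I)$. The factor-$2$ portion of your argument is correct and matches the paper's inequality $C_j^{o}\le C_j^{p}+\sum_{i:C_i^{p}\le C_j^{p}}p_i+j\tau$ together with $\sum_{i=1}^{j}p_i\le C_j^{p}$.

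However, the step you flag as ``the hard part'' is resolved incorrectly in your sketch. You propose to charge each setup only to the job it immediately precedes so that ``each job $i$ absorbs only its own $\tau\le\theta\sigma_i$ worth of delay rather than the delay of all prior setups.'' That accounting is not valid: the completion time of the $j$-th job genuinely contains all $j$ setups incurred before it finishes, so the setup contribution to $\sum_j C_j^{o}$ is $\sum_j j\tau$, not $n\tau$, and no charging scheme can make job $j$'s completion time smaller than it is. The correct (and much simpler) resolution, which is the one the paper uses, requires no amortization at all: since each of the $j$ jobs completed by SRPT before time $C_j^{p}$ has processing time at least $p_{min}$, we have $j\tau\le j\theta p_{min}\le\theta\sum_{i=1}^{j}p_i\le\theta C_j^{p}$, so the \emph{entire} setup burden of $j\tau$ is absorbed by $\theta C_j^{p}$ per job, giving $C_j^{o}\le(2+\theta)C_j^{p}$ directly. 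You already state both ingredients ($\tau\le\theta p_{min}$ and $\sum_{i\le j}p_i\le C_j^{p}$); you just need to multiply them together rather than invent a per-setup charging rule. Separately, your tightness construction (overlaying the factor-$2$ lower bound with a cross-queue setup gadget) is far more elaborate than necessary and is left unspecified; the paper's tightness witness is a single job of processing time $1$ released at time $0$, for which $C^{p}=1$ while OM finishes at $1+(1+\theta)=2+\theta$, since OM only releases the job for real processing once the simulated SRPT has completed it.
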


\begin{proof}
Let the completion time of the $j^{th}$ job under OM scheduling be
$C_{j}^{o}$, and the completion time of the $j^{th}$ job completed
under SRPT be $C_{j}^{p}$. Since job $j$ is also the $j^{th}$ job
that completes service under SRPT, we have $\sum_{i=1}^{j}p_{i}\leq C_{j}^{p}$.
Then we have $C_{j}^{o}\leq C_{j}^{p}+\sum_{i:C_{i}^{p}\leq C_{j}^{p}}p_{i}+j\tau\leq(1+\theta)C_{j}^{p}+\sum_{i=1}^{j}p_{i}\leq(2+\theta)C_{j}^{p}$.
Since $C^{p}(\utilde{I})\leq C^{*}(I),$ we get $\sum C_{i}^{o}\leq(2+\theta)\sum C_{i}^{p}\leq(2+\theta)\sum C_{i}^{*}$.
The competitive ratio is tight when there is only one job in the instance
$I$ which is available at time $0$. Suppose this job has processing
time $1$. Then $C^{p}(I)=1$, and $C^{o}(I)=1+(1+\theta)=2+\theta$. 
\end{proof}
The OM algorithm is intuitive, easy to apply and polynomial-time solvable.
Despite its simplicity, we may find it inefficient since setup times
are ignored. Although each unnecessary switch only brings a small
amount of delay if $\theta$ is small, we may still want to avoid
switching too often. Thus we provide another policy that is based
on OM, under which the server will avoid unnecessary setups. We call
it the \emph{Modified One Machine} (MOM) policy, with description
in Algorithm \ref{alg:Gittins-Index-Policy}. Under MOM, we will 1)
regard the completion time of each job under SRPT on $\utilde{I}$
as the new ``arrival'' time, 2) modify the processing time for job
$i$ as $p_{i}+\tau$ if it is not located in the queue that the server
is serving, and 3) process the job with smallest modified processing
time. After completing a job, the processing time for all the jobs
in the system is modified again, and the same mechanism is repeated.
Under MOM, the server will prefer the jobs from the queue that it
is currently serving, thus avoiding frequent switching. We denote
the completion time of job $i$ in $I$ by MOM as $C_{i}^{g}$. 

\begin{algorithm}
\begin{algorithmic}[1]
\Require{Instance $I$} 
\State{Denote the queue that the server is serving as $queue_{server}$}
\While{$I$ has not been fully processed}
\State{Simulate SRPT on $\utilde{I}$. Regard the departure time of the $i^{th}$ job in SRPT as the $i^{th}$ arrival time in MOM. }
\If{$i^{th}$ arrival is at $queue_{server}$}
\State{$\tilde{p}_i = {p_i}$}
\Else
\State{$\tilde{p}_i = {p_i+\tau}$}
\EndIf
\State{Schedule the job with smallest $\tilde{p}_i$}
\EndWhile
\\
\Return{Total completion time $C^{g}(I)$}
\end{algorithmic}

\caption{Modified One Machine Scheduling (MOM) \label{alg:Gittins-Index-Policy}}
\end{algorithm}

\begin{thm}
\label{thm:GIPP}MOM is a $(2+\theta)$-competitive online algorithm
for $1|r_{i},\tau\leq\theta p_{min}|\sum C_{i}$. The competitive
ratio is tight when using SRPT on the reduced instance as the benchmark.
\end{thm}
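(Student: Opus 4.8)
The plan is to mirror the structure of the proof of Theorem~\ref{thm:One-Machine}, since GIPP is built on the same SRPT-simulation idea as OM and the claimed competitive ratio $(2+\theta)$ is identical. The benchmark is again $C^p(\utilde{I})$, the total completion time of SRPT on the setup-time-reduced instance, which satisfies $C^p(\utilde{I})\le C^*(I)$. The key observation carried over from Theorem~\ref{thm:One-Machine} is that if $C_j^p$ denotes the completion time of the $j^{th}$ job to finish under SRPT on $\utilde{I}$, then $\sum_{i=1}^{j}p_i\le C_j^p$, because the total work of the first $j$ completed jobs cannot exceed the time at which the $j^{th}$ one finishes in a no-setup system.

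First I would fix the job indexing so that job $j$ is the $j^{th}$ job to depart under the simulated SRPT; GIPP treats these SRPT departure times as the ``arrival'' times of a secondary scheduling problem and then dispatches by largest Gittins index. The goal is to bound $C_j^g$, the completion time of job $j$ under GIPP. I would argue that, because GIPP never starts a job before its SRPT departure time $C_j^p$, and because each of the first $j$ jobs contributes at most its own processing time plus at most one setup $\tau$ before job $j$ completes, one obtains a bound of the same shape as in Theorem~\ref{thm:One-Machine}:
\begin{eqnarray*}
C_j^g & \le & C_j^p+\sum_{i:\,C_i^p\le C_j^p}p_i+j\tau\le(1+\theta)C_j^p+\sum_{i=1}^{j}p_i\le(2+\theta)C_j^p.
\end{eqnarray*}
Here the step $j\tau\le\theta\sum_{i=1}^{j}p_i\le\theta C_j^p$ uses $\tau\le\theta p_{min}\le\theta p_i$, exactly as before. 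Summing over $j$ and invoking $C^p(\utilde{I})\le C^*(I)$ yields $\sum C_i^g\le(2+\theta)\sum C_i^p\le(2+\theta)\sum C_i^*$, which is the competitive-ratio claim. Tightness follows from the same single-job instance used for OM: one job of processing time $1$ available at time $0$ gives $C^p=1$ and $C^g=1+(1+\theta)=2+\theta$.

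The main obstacle is that GIPP's dispatch rule differs from OM's: OM simply follows the SRPT \emph{completion order}, whereas GIPP re-prioritizes the SRPT-departed jobs by the index $\tfrac1p$ (same queue) versus $\tfrac1{p+\tau}$ (different queue), which can reorder jobs relative to the SRPT order. Consequently the clean accounting ``job $j$ under GIPP is the $j^{th}$ job to be dispatched'' is not automatic, and I must verify that the per-job delay incurred by a job is still controlled by the work and setups of jobs with earlier or equal SRPT-departure times. The careful part will be showing that when GIPP, at the moment job $j$ becomes available, may choose a different job, the resulting extra delay to job $j$ is still charged against processing times $p_i$ of jobs with $C_i^p\le C_j^p$ together with at most one setup per such job, so that the same $\sum_{i\le j}p_i+j\tau$ envelope holds. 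Intuitively the index design only ever advances the completion of the currently available set and prefers avoiding a setup, so it cannot make the aggregate finishing time worse than the SRPT-order accounting; making this monotonicity precise — rather than the arithmetic, which is routine once the bound on $C_j^g$ is established — is the crux of the argument.
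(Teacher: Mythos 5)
Your proposal follows exactly the paper's route -- benchmark $C^{p}(\utilde{I})\leq C^{*}(I)$, the per-job bound $C_{j}^{g}\leq C_{j}^{p}+\sum_{i:C_{i}^{p}\leq C_{j}^{p}}p_{i}+j\tau\leq(2+\theta)C_{j}^{p}$, and the identical single-job tightness instance -- but it stops short of proving the one step that actually requires an argument. You correctly identify that GIPP may serve the SRPT-departed jobs out of their departure order, so the FCFS accounting from Theorem \ref{thm:One-Machine} does not transfer automatically, and you then declare that making this precise ``is the crux of the argument'' without carrying it out. The sentence ``the index design only ever advances the completion of the currently available set\dots so it cannot make the aggregate finishing time worse'' is an appeal to intuition, not a proof; as written, the central inequality is asserted rather than established, so the proposal is incomplete.

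The paper closes this gap by working with the modified processing times $\tilde{p}_{i}$ defined as the inverses of the Gittins indices, so $\tilde{p}_{i}=p_{i}$ or $p_{i}+\tau$ and in all cases $\tilde{p}_{i}\leq p_{i}+\tau$. Viewing GIPP as a non-preemptive, work-conserving, smallest-$\tilde{p}$-first list schedule on the secondary instance whose release dates are the SRPT departure times $\{C_{i}^{p}\}$, it claims the list-scheduling bound $C_{j}^{g}\leq C_{j}^{p}+\sum_{i=1}^{j}\tilde{p}_{i}$, where the sum runs over the $j$ jobs released by time $C_{j}^{p}$; then $\sum_{i=1}^{j}\tilde{p}_{i}\leq\sum_{i:C_{i}^{p}\leq C_{j}^{p}}(p_{i}+\tau)\leq(1+\theta)C_{j}^{p}$ via $\sum_{i=1}^{j}p_{i}\leq C_{j}^{p}$, exactly as in your arithmetic. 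So the missing ingredient in your write-up is precisely this intermediate quantity $\tilde{p}_{i}$ and the bound $C_{j}^{g}\leq C_{j}^{p}+\sum_{i=1}^{j}\tilde{p}_{i}$; to complete your proof you would need to justify that bound (the paper's own justification -- one line invoking the service order by $\tilde{p}_{i}$ -- is itself terse, which confirms that you located the genuinely delicate point, but locating it is not the same as resolving it).
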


\begin{proof}
Note both MOM and OM simulate SRPT on $\utilde{I}$ and schedule job
$i$ only after job $i$ has been processed in SRPT. So we can regard
OM as FCFS for a job instance with arrival times \{$C_{i}^{p}$,$i=1,2,...$\},
while MOM serves the job with the smallest modified processing first
in this instance with the same arrival times. We have $C_{j}^{g}\leq C_{j}^{p}+\sum_{i=1}^{j}\tilde{p_{i}}$,
where $\tilde{p_{i}}$ is the modified processing time of job $i$.
Since MOM schedules the available jobs in the descending order of
$\tilde{p}_{i}$, we have $C_{j}^{g}\leq C_{j}^{p}+\sum_{i=1}^{j}\tilde{p_{i}}\leq C_{j}^{p}+\sum_{i:C_{i}^{p}\leq C_{j}^{p}}(p_{i}+\tau)\leq(2+\theta)C_{j}^{p}$.
We give the same example as in Theorem \ref{thm:One-Machine} to show
the tightness of competitive ratio: Suppose there is only one job
with $p=1$ in instance $I$, available at time $0$. Then $C^{p}(I)=1$,
and $C^{g}(I)=1+(1+\theta)=2+\theta$. 
\end{proof}
Though MOM avoids some switching, OM and MOM have the same competitive
ratio when using SRPT on reduced instance as the benchmark. 

Next we show the lower bound for competitive ratios of the problem
$1|r_{i},\tau=\theta p_{min}|\sum C_{i}$. Notice this is a special
case for the problem $1|r_{i},\tau\leq\theta p_{min}|\sum C_{i}$.
\begin{thm}
\label{thm:Lower-Bound-2}If $\tau=\theta p_{min}$ and $\theta\geq0$,
then there is no online algorithm whose competitive ratio is smaller
than $\theta+1$, using SRPT on the reduced instance as the benchmark.
\end{thm}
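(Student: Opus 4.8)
The plan is to establish the bound by a one-job adversary argument, reusing the tightness idea behind Theorems \ref{thm:One-Machine} and \ref{thm:GIPP} but now playing the adversary against an \emph{arbitrary} deterministic online policy rather than computing the cost of a fixed one. Fix any online policy $\pi$. Since $\pi$ must commit the server to a physical location before any arrival is revealed, let $q_0$ be the queue at which $\pi$ places the server at time $0$; because $k\geq 2$ the adversary can select a queue $q\neq q_0$. I would then present $\pi$ with the instance $I$ consisting of a single job of processing time $p_{min}$ released at time $0$ in queue $q$, with the setup fixed at $\tau=\theta p_{min}$ as prescribed.

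The key observation is that $\pi$ cannot complete this job without first switching from $q_0$ to $q$, which costs one full setup $\tau$ under the non-resumable, non-preemptive model; the job can only be processed at queue $q$ after that setup has elapsed. Hence the earliest possible completion is $\tau+p_{min}=(\theta+1)p_{min}$, so $C^{\pi}(I)\geq(\theta+1)p_{min}$, and any idling or preliminary switching only increases this. On the benchmark side, the reduced instance $\utilde{I}$ carries the same single job but no setup cost, so SRPT begins at time $0$ and finishes at $p_{min}$, giving $C^{p}(\utilde{I})=p_{min}$. Therefore
\begin{equation*}
\frac{C^{\pi}(I)}{C^{p}(\utilde{I})}\geq\frac{(\theta+1)p_{min}}{p_{min}}=\theta+1 .
\end{equation*}
Since this construction can be repeated for every deterministic online policy $\pi$ (the adversary merely adapts $q$ to the policy's instance-independent starting queue), no online policy can attain a competitive ratio smaller than $\theta+1$ against SRPT on the reduced instance.

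The computation here is trivial, so the effort goes into making the two structural claims watertight rather than into arithmetic. The main point to argue carefully is that \emph{every} online policy is forced to pay at least one full setup before completing the job: the server occupies a definite queue at each instant, reaching $q$ from any other queue costs exactly $\tau$, and pre-positioning cannot help because the adversary picks $q$ only after the policy's starting configuration is fixed. A secondary point I would state explicitly is that the benchmark genuinely incurs no setup—in $\utilde{I}$ switching is free, so SRPT achieves completion time exactly $p_{min}$—which is what makes the ratio land precisely at $\theta+1$ and leaves the expected gap of $1$ against the $(2+\theta)$ upper bounds of Theorems \ref{thm:One-Machine} and \ref{thm:GIPP}.
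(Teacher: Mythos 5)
Your proof is correct and is essentially the paper's own argument: a single job of processing time $p_{min}$ forces any online policy to pay one setup plus the processing time, giving $C^{\pi}(I)\geq(1+\theta)p_{min}$ against a benchmark of $p_{min}$; your adversary placement of the job in a queue $q\neq q_0$ merely makes explicit the forced setup that the paper's model already builds in (the first busy period is always of type II, requiring a setup). The only thing you omit is the degenerate case $\tau=p_{min}=0$, where your ratio becomes $0/0$ and the paper instead invokes the convention $\theta=1$ together with the known lower bound of $2$ for $1|r_i|\sum C_i$ from the literature.
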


\begin{proof}
If there is one job with processing time $p_{min}$ in the system,
we have $\frac{C^{\pi}(I)}{C^{p}(I)}\geq\frac{(1+\theta)p_{min}}{p_{min}}=1+\theta$.
If $\tau=p_{min}=0$ (so $\theta=1$), then the lower bounded ratio
is $\theta+1=2$ as provided in \cite{hoogeveen1996optimal}.
\end{proof}
A natural question is whether this lower bound is the best lower bound
that one can have. The answer remains open. There could be either
an online policy whose competitive ratio is exactly equal to this
lower bound, or a larger lower bound which is closer to the ratio
$(2+\theta)$. 

So far we have shown the existence of constant competitive ratios
under different assumptions for polling systems as summarized in Table
\ref{tab:Competitive-Ratios-for}. We find that when either $p_{max}\leq\gamma p_{min}$
or $\tau\leq\theta p_{min}$ holds, we can have constant competitive
ratios. In fact, in many practical scenarios either the processing
time is bounded or the setup time is bounded or both. Also, when $p_{max}\leq\gamma p_{min}$
and $\tau\leq\theta p_{min}$ both hold, any work-conserving policies,
as well as policies from $\Pi_{r}$, OM, and MOM all have constant
competitive ratios. When processing time variation is relatively small
compared with the setup time, i.e., $\gamma$ is small and $\theta$
is large, then policies from $\Pi_{r}$ can have a smaller competitive
ratio than OM or MOM. On the other hand, when $\theta$ is small while
$\gamma$ or $k$ is large, using OM or MOM may be more efficient
as setup times do not contribute much to the total completion time.
When $1\leq\gamma<2$, a work-conserving policy outperforms OM and
MOM, and it also outperforms policies from $\Pi_{r}$ if $\theta<\frac{1}{k}\gamma$
and $\gamma\geq k$. 

\begin{table}
\begin{center}%
\begin{tabular}{|c|c|}
\hline 
Assumption & Competitive Ratio\tabularnewline
\hline 
\hline 
$p_{max}\leq\gamma p_{min}$, $\tau\leq\theta p_{min}$ & $\min\{2+\theta,\gamma+\theta,\max\{\frac{k+1}{k}\gamma,k+1)\}\}$\tabularnewline
\hline 
$\tau\leq\theta p_{min}$ & $2+\theta$\tabularnewline
\hline 
$p_{max}\leq\gamma p_{min}$ & $\max\{\frac{k+1}{k}\gamma,k+1\}$\tabularnewline
\hline 
Unbounded Processing Time and Setup Time & $\geq2$\tabularnewline
\hline 
\end{tabular}\end{center}\caption{Competitive Ratios for Different Cases\label{tab:Competitive-Ratios-for}}
\end{table}

\section{Concluding Remarks and Future Work \label{sec:Concluding-Remarks-and}}

In this paper we consider scheduling policies in the polling system
without stochastic assumptions. Conditions for the existence of constant
competitive ratios are discussed and competitive ratios for several
well-studied polling system scheduling policies are provided. Specifically,
we show that for $1\mid r_{i},\tau,p_{max}\leq\gamma p_{min}\mid\sum C_{i}$
system, an online policy needs to have a cyclic routing discipline
and exhaustive-like service discipline to achieve a constant competitive
ratio. We provide a policy set $\Pi_{r}$ such that every policy from
$\Pi_{r}$ has a constant competitive ratio $\kappa$ for problem
$1\mid r_{i},\tau,p_{max}\leq\gamma p_{min}\mid\sum C_{i}$. We further
provide processing-time based policies which have constant competitive
ratios in system $1\mid r_{i},\tau\leq\theta p_{min}\mid\sum C_{i}$.
We show that if the routing discipline for an online policy is static,
random, purely queue-length based, or purely processing-time based,
then the competitive ratio of this policy cannot be smaller than $k$.
However, it remains unknown whether there exists an online policy
with a constant competitive ratio for the problem $1\mid r_{i},\tau\mid\sum C_{i}$
without any bounding conditions for processing times and setup times.

\bibliographystyle{acm}
\bibliography{JinPolling_Online}

\appendix

\section{\label{sec:Proof-for-Theorem}Proof for Theorem \ref{thm:Gated}
in the Main Paper}

In this section we mainly provide the proof for Theorem \ref{thm:Gated}
of our paper. We first introduce
a fact that will be useful later in our proof. 
\begin{fact}
\label{For-positive-numbers}For positive numbers $\{a_{i},b_{i}\}_{i=1}^{n}$,
we have $\frac{\sum_{i=1}^{n}a_{i}}{\sum_{i=1}^{n}b_{i}}\leq\max_{i=1}^{n}\{\frac{a_{i}}{b_{i}}\}.$
\end{fact}

\begin{proof}
Without loss of generality, assume $\frac{a_{n}}{b_{n}}=\max_{i=1}^{n}\{\frac{a_{i}}{b_{i}}\},$
then for any $i$ we have $\frac{a_{n}}{b_{n}}\geq\frac{a_{i}}{b_{i}}$,
thus $a_{n}b_{i}\geq a_{i}b_{n}$ holds for all $i=1,...,n$. Since
$\frac{a_{n}}{b_{n}}-\frac{\sum_{i=1}^{n}a_{i}}{\sum_{i=1}^{n}b_{i}}=\frac{a_{n}\sum_{i=1}^{n}b_{i}-b_{n}\sum_{i=1}^{n}a_{i}}{b_{n}(\sum_{i=1}^{n}b_{i})}=\frac{\sum_{i=1}^{n}(a_{n}b_{i}-a_{i}b_{n})}{b_{n}(\sum_{i=1}^{n}b_{i})}\geq0$,
we have $\frac{\sum_{i=1}^{n}a_{i}}{\sum_{i=1}^{n}b_{i}}\leq\frac{a_{n}}{b_{n}}=\max_{i=1}^{n}\{\frac{a_{i}}{b_{i}}\}$.
\end{proof}
In Theorem \ref{thm:Gated} we want to show $\sup_{I}\frac{C^{e}(I)}{C^{*}(I)}\leq\rho$
for some constant $\rho.$ However, by Fact \ref{For-positive-numbers}
we only need to show that this inequality holds for the instance processed
in each busy period of the server under an online policy $e\in\Pi_{1}$.
Here we first introduce the concept of busy periods under the online
policy. If there is at least one job in the system, we say the system
is busy, otherwise it is empty. When the system is empty, the server
is not serving under the online policy. The status of the system under
the online policy can be described as a busy period following an empty
period, and then following by a busy period, and so on. There are
two types of busy periods that we are interested in. A Type I busy
period (denoted as I-B) is the busy period in which the server resumes
work without setting up. This is because the server was idling at
the last queue it served (say queue $i$) after the previous busy
period, and the first arrival in the new busy period also occurs at
queue $i$. A Type II busy period (denote as II-B) is the busy period
in which the server resumes work with a setup, because the new arrival
occurs at a queue different from the queue where the server was idling.
Note that the total job instance may be processed by the online policy
in multiple busy periods, and in the proof we only consider the subset
of this job instance that is processed in a certain busy period, which
we call busy period instance. Also note that the very first busy period
must be a II-B, since the server has to set up a queue before processing
at the very beginning.

Without loss of generality, we say a cycle (round) starts when the
server under the online policy visits queue 1 and ends when it visits
queue 1 the next time. If at some time point a queue, say queue $i$,
is empty and skipped by the server, we still say that queue $i$ has
been visited in this cycle, with setup time 0. We define the job instance
served by the server in its $w^{th}$ visit to queue as $b_{i}^{w}$
(for $i=1,...,k$), and we call each $b_{i}^{w}$ a \emph{batch}.
Batch $b_{i}^{w}$ is a subset of job instance $I$. In each cycle,
there are $k$ batches served by the online policy, and some, but
not all may be empty (if all of them are empty then the system is
empty and the server would idle at queue 1). We let $I^{w}=\cup_{i=1}^{k}b_{i}^{w}$.
For each batch $b_{i}^{w}$ with the number of jobs $n(b_{i}^{w})=n_{i}^{w}$,
$S_{i}^{w}$ is the earliest time when a job from $b_{i}^{w}$ starts
being processed under the online policy, and $R_{i}^{w}$ is the earliest
release date (arrival time) over all jobs from batch \textbf{$b_{i}^{w}$}.
Notice that $R_{i}^{w}\leq S_{i}^{w}$. Each batch $b_{i}^{w}$ may
be processed by the optimal offline policy in a different way from
the online policy. Suppose $E_{i}^{w*}$ is the earliest time when
a job in batch $b_{i}^{w}$ starts service under the optimal offline
policy. Note $E_{i}^{w*}$ may differ from $S_{i}^{w}$. Before time
$S_{i}^{w}$, we know all the batches $(\cup_{j=1}^{w-1}I^{j})\cup(\cup_{l=1}^{i-1}b_{l}^{w})$
have been served by the online policy. However in the optimal offline
policy, only some jobs from these batches have been served, and some
other jobs from other cycles of the online policy may have already
been served. We suppose $q_{i}^{w}$ number of jobs in $(\cup_{j=1}^{w-1}I^{j})\cup(\cup_{l=1}^{i-1}b_{l}^{w})$
have been served by the optimal offline policy before time $E_{i}^{w*}$.
We define $E_{i}^{w}$ as the earliest time when the truncated optimal
offline policy starts to serve batch $b_{i}^{w}$, where the idea
of truncated optimal offline policy will be defined later. We let
$C^{e}(I)$ be the cumulative completion times of all jobs in job
instance $I$ under online policy $e\in\Pi_{1}$, and $C^{*}(I)$
be the cumulative completion times of all jobs in $I$ under the offline
optimal policy. For convenience, we let $g(I)=\frac{n(I)\left(n(I)+1\right)}{2}$
for job instance $I$, which is the sum of arithmetic sequence from
1 to $n(I)$. Also, $g(I)$ can be regarded as the completion time
of $I$ when 1) all the jobs are available at time $0$, 2) each of
them has processing time $1$, and 3) no setup time is considered.
All the notations are summarized in Table \ref{tab:List-of-Notations}
of this document. Before going to the proof of Theorem \ref{thm:Gated},
we first provide an example to show how the total completion time
is characterized.
\begin{example}
\label{exa:1}Suppose there is a job instance $I$ with $n(I)=n_{1}+n_{2}$
jobs which arrive at time $R$. Each of the job has processing time
$1$. Under some policy $\pi$, suppose the server starts to process
the first $n_{1}$ jobs, and idles for time $W$, and then processes
the rest of $n_{2}$ jobs without idling. The total completion time
for $I$ is given by
\end{example}

\begin{eqnarray*}
C^{\pi}(I) & = & (R+1)+(R+2)+...+(R+n_{1})+(R+n_{1}+W+1)+(R+n_{1}+W+2)+...+(R+n_{1}+W+n_{2})\\
 & = & (n_{1}+n_{2})R+n_{2}W+g(I).
\end{eqnarray*}

Notice the completion time of $I$ is made up of three components.
The first component $(n_{1}+n_{2})R$ is because all the jobs in $I$
arrive at time $R$. The second term $n_{2}W$ is because the remaining
$n_{2}$ jobs wait for another $W$ amount of time. The third term
$g(I)$ is the pure completion time if we process jobs one by one
without idling.

Notice that the optimal policy may not always be work-conserving (i.e.,
never idles when there are jobs in the system). The optimal policy
may wait at some queue in order to receive more jobs which will arrive
in the future. The \emph{truncated optimal solution} (of a busy period
instance $I$) is defined by the completion time for the optimal offline
problem minus the additional completion time caused by idling, which
is shown in Figure \ref{fig:Truncated-Optimal-Solution}. There is
a waiting (idling) period $W$ between $b_{1}^{1}$ and $b_{1}^{2}$
in Figure \ref{fig:Truncated-Optimal-Solution}. The truncated optimal
solution is given by $C^{*}(b_{1}^{1}\cup b_{1}^{2}\cup b_{2}^{1}\cup b_{3}^{1})-W(n_{1}^{2}+n_{2}^{1}+n_{3}^{1})$.
We use $C^{t}(I)$ to denote the total completion time for a busy
period instance $I$ under the truncated optimal schedule. Note that
this truncated optimal solution is only for an instance $I$ served
by the online policy during a specific busy period, so the earliest
start time of service of this truncated optimal policy cannot be earlier
than the earliest arrival time of $I$. Also note that the truncated
optimal solution is always a lower bound for the real optimal solution. 

\begin{figure}
\begin{center}\includegraphics[scale=0.4]{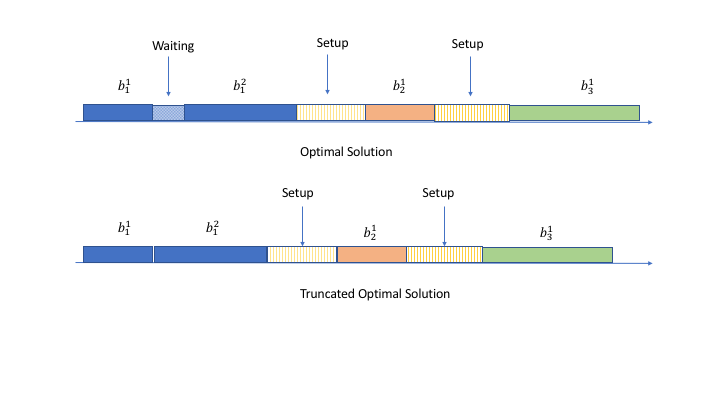}\end{center}

\caption{Truncated Optimal Solution \label{fig:Truncated-Optimal-Solution}}
\end{figure}

\begin{lem}
\label{lem:Truncated}Suppose $I$ is a job instance served in a busy
period by the online policy, $b$ is a batch in some cycle for the
offline problem and $p_{min}=1$, then $C^{t}(I\cup b)\geq C^{t}(I)+g(b)+En(b)+n(b)(n(I)-q),$
where $E$ is the time when the server starts serving batch $b$ in
the truncated optimal solution, and $q$ is the number of jobs in
$I$ that are served before time $E$.
\end{lem}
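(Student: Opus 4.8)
The plan is to realize $C^{t}(I\cup b)$ by a concrete no-idling schedule and then bound its cost from below by peeling off the contribution of the batch $b$. Let $\sigma$ denote the truncated optimal schedule of $I\cup b$, so that $C^{t}(I\cup b)=\sum_{j\in I\cup b}C_{j}^{\sigma}$; by the definition of $E$ the first job of $b$ starts at time $E$, and by the definition of $q$ exactly $q$ jobs of $I$ are completed before $E$. Write $I=Q\cup\mathcal{R}$, where $Q$ is the set of those $q$ jobs completed before $E$ and $\mathcal{R}$ is the set of the remaining $n(I)-q$ jobs of $I$, all completed at or after $E$. First I would form the schedule $\sigma'$ for $I$ alone obtained from $\sigma$ by deleting every job of $b$ and sliding the later jobs earlier to close the gaps; since $\sigma$ never idles, neither does $\sigma'$, so $\sigma'$ is a legitimate no-idling schedule for $I$ and therefore $\sum_{j\in I}C_{j}^{\sigma'}\ge C^{t}(I)$.

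Next I would split the completion-time sum exactly as
\[
C^{t}(I\cup b)=\sum_{j\in I}C_{j}^{\sigma'}+\Delta+\sum_{\beta\in b}C_{\beta}^{\sigma},
\]
where $\Delta=\sum_{j\in I}\bigl(C_{j}^{\sigma}-C_{j}^{\sigma'}\bigr)\ge 0$ is the total amount by which inserting $b$ delays the jobs of $I$. The three pieces are then bounded separately. The jobs of $Q$ are untouched by the deletion, so $\Delta$ only charges delays on $\mathcal{R}$; since every workload is at least $p_{min}=1$, deleting the $b$-jobs that precede a given $\alpha\in\mathcal{R}$ shortens its completion time by at least the number of such $b$-jobs, giving $\Delta\ge\sum_{\alpha\in\mathcal{R}}\#\{\beta\in b:\beta\text{ before }\alpha\}$. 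For the batch itself, if $\beta_{1},\dots,\beta_{n(b)}$ are the jobs of $b$ in order of completion under $\sigma$, then $\beta_{\ell}$ cannot finish before the server has spent, beyond time $E$, at least the workloads of $\beta_{1},\dots,\beta_{\ell}$ together with those of every $\mathcal{R}$-job that completes before it; using $p_{min}=1$ this yields $C_{\beta_{\ell}}^{\sigma}\ge E+\ell+r_{\ell}$ with $r_{\ell}=\#\{\alpha\in\mathcal{R}:\alpha\text{ before }\beta_{\ell}\}$, and summing over $\ell$ gives $\sum_{\beta\in b}C_{\beta}^{\sigma}\ge n(b)E+g(b)+\sum_{\ell}r_{\ell}$.

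Finally I would recombine. The two cross-terms telescope: for each of the $n(b)(n(I)-q)$ pairs $(\beta,\alpha)$ with $\beta\in b$ and $\alpha\in\mathcal{R}$, exactly one of ``$\alpha$ before $\beta$'' or ``$\beta$ before $\alpha$'' holds, and that pair is counted once, either in $\sum_{\ell}r_{\ell}$ or in the lower bound for $\Delta$; hence their sum is exactly $n(b)\bigl(n(I)-q\bigr)$. Adding the three bounds gives
\[
C^{t}(I\cup b)\ge C^{t}(I)+g(b)+E\,n(b)+n(b)\bigl(n(I)-q\bigr),
\]
which is the claim.

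I expect the genuinely delicate step to be the very first one, the inequality $\sum_{j\in I}C_{j}^{\sigma'}\ge C^{t}(I)$: one must argue that discarding $b$ from the truncated optimal schedule of $I\cup b$ cannot yield a schedule for $I$ that undercuts $C^{t}(I)$. This is exactly where the reading of the truncated optimum as the least total completion time attainable without idling is needed, and where I would verify that closing the gaps, and possibly merging two consecutive visits to the same queue into one, only decreases completion times, so that no hidden idling and no extra setup is introduced by the deletion. The pairwise bookkeeping of the cross-terms is the other place demanding care, but it becomes routine once the partition $I=Q\cup\mathcal{R}$ is fixed and $p_{min}=1$ is used to convert each workload into a count of at least one.
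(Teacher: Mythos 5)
Your proof is correct and arrives at the same accounting as the paper's --- the batch itself contributes $g(b)+E\,n(b)$, and each of the $n(b)\,(n(I)-q)$ pairs formed by a job of $b$ and a job of $I$ scheduled after time $E$ contributes at least one unit of delay --- but you reach it from the opposite direction. The paper argues by \emph{insertion}: it asserts that the increment $C^{t}(I\cup b)-C^{t}(I)$ is minimized when every job of $b$ has workload $p_{min}=1$ and $b$ is scheduled as a single contiguous block starting at $E$, and then reads off the cost of that configuration; any interleaving of $b$ with the remaining jobs of $I$ is dispatched by this (asserted, not proved) exchange claim. You argue by \emph{deletion}: you take the actual truncated optimal schedule of $I\cup b$, strip out $b$, and charge the two cross-terms pairwise, noting that for each pair $(\beta,\alpha)$ exactly one of the two orders occurs, so the two sums add to exactly $n(b)\,(n(I)-q)$. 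Your version is the more rigorous of the two, since it handles arbitrary interleaving with no optimality-of-a-configuration step. Both arguments rest on the same unproved foundation, which you correctly single out: that the schedule of $I$ obtained by removing $b$ and closing the gaps costs at least $C^{t}(I)$. The paper's definition of the truncated optimum (the optimal schedule's cost minus its idling) does not immediately yield this, because the gap-closed schedule may violate release dates and so cannot be compared to $C^{*}(I)$; one really must read $C^{t}$ as a lower bound over all no-idling, release-date-relaxed schedules, exactly as you note. One small bookkeeping correction: a job of $I$ that completes exactly at time $E$ should be placed in $Q$ rather than in $\mathcal{R}$, since its workload is spent before $E$ and would otherwise be double-counted against your bound $C^{\sigma}_{\beta_{\ell}}\geq E+\ell+r_{\ell}$.
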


\begin{proof}
We suppose the optimal solution is given, and now we consider the
total completion time of $I\cup b$ under truncated optimal solution.
If all the jobs from $b$ are served after $I$ in the optimal solution,
then we have $C^{t}(I\cup b)\geq C^{t}(I)+g(b)+En(b)$. If not, we
let $\delta(b)=C^{t}(I\cup b)-C^{t}(I)$ be the additional completion
time incurred by inserting $b$ into $I$. Notice that $\delta(b)$
is minimized when all jobs in $b$ has $p_{min}=1$. If we can show
that $\delta(b)\geq g(b)+En(b)+n(b)(n(I)-q)$ with every job in $b$
having $p_{min}=1$, we can then prove the lemma. So we assume here
that every job in $b$ has $p_{min}=1$. Since the earliest time to
process batch $b$ in the truncated optimal solution is $E,$ if we
combine all jobs in $b$ altogether and serve them in one batch from
time $E$ to time $E+n(b)$, then $\delta(b)$ is again minimized
since all the jobs in $b$ have the smallest processing time. So in
the following we show that by inserting batch $b$ at time $E,$ the
additional completion time incurred is at least $g(b)+En(b)+n(b)(n(I)-q).$
By inserting batch $b$ into $I$ from time $E$ to $E+n(b)$, some
jobs from $I$ served after $E$ in the original truncated optimal
solution (with total number $(n(I)-q)$) are moved after batch $b$,
resulting an increase of delay $n(b)(n(I)-q)$ for these jobs. Besides,
inserting a batch $b$ at time $E$ increases the total completion
time by $g(b)+En(b)$. 
\end{proof}
Lemma \ref{lem:Truncated} is illustrated in Figure \ref{fig:Insert-a-Batch}.
The first schedule in Figure \ref{fig:Insert-a-Batch} is the truncated
optimal for the batch $b_{1}^{1}\cup b_{2}^{1}\cup b_{3}^{1}$. The
second schedule is the truncated optimal schedule for $b_{1}^{1}\cup b_{2}^{1}\cup b_{3}^{1}\cup b$,
where $b$ is separated into two parts. If all jobs in $b$ have processing
time $p_{min}=1$, it is always beneficial to schedule all jobs of
$b$ in the same batch, which is shown as the third schedule in Figure
\ref{fig:Insert-a-Batch}. Notice that in Figure \ref{fig:Insert-a-Batch},
$q=n(b_{1}^{1}+b_{1}^{2})=n_{1}^{1}+n_{1}^{2}.$ 

\begin{figure}
\begin{center}\includegraphics[scale=0.4]{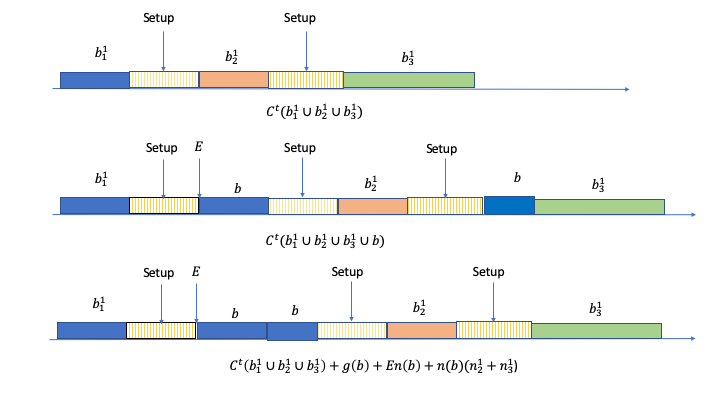}\end{center}

\caption{Insert a Batch\label{fig:Insert-a-Batch}}
\end{figure}

\begin{lem}
\label{lem:Suppose--is}Suppose $I$ is a job instance served by the
online policy in a busy period, $b$ is a batch and $p_{min}=1$,
then $C^{*}(I\cup b)\geq C^{*}(I)+g(b)+E^{*}n(b),$ where $E^{*}$
is the time when the server starts serving batch $b$ in the optimal
solution.
\end{lem}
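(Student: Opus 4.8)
The plan is to argue directly from the optimality of $C^{*}(I)$, which makes this cleaner than the truncated counterpart in Lemma \ref{lem:Truncated}: here we need not track how inserting $b$ displaces the jobs of $I$, so the extra term $n(b)(n(I)-q)$ does not appear. First I would fix an optimal offline schedule $\sigma$ for the combined instance $I\cup b$ and split its objective as
\begin{eqnarray*}
C^{*}(I\cup b) & = & \sum_{j\in I}C_{j}^{\sigma}+\sum_{j\in b}C_{j}^{\sigma},
\end{eqnarray*}
then bound the two sums separately.

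For the $b$-part, recall that $E^{*}$ is the earliest time at which any job of $b$ begins service under $\sigma$, so (using $p_{min}=1$) no job of $b$ can complete before $E^{*}+1$. Since there is a single server, I would order the jobs of $b$ by their completion times under $\sigma$, say $C_{(1)}^{\sigma}\leq\cdots\leq C_{(n(b))}^{\sigma}$; before time $C_{(i)}^{\sigma}$ the server has fully completed at least $i$ jobs of $b$, each requiring processing time at least $p_{min}=1$ and none starting before $E^{*}$, whence $C_{(i)}^{\sigma}\geq E^{*}+i$. Summing over $i$ gives $\sum_{j\in b}C_{j}^{\sigma}\geq E^{*}n(b)+\sum_{i=1}^{n(b)}i=E^{*}n(b)+g(b)$. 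Any jobs of $I$ or setup times interleaved among the jobs of $b$ only delay them further, so this lower bound is safe.

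For the $I$-part, I would delete the jobs of $b$ from $\sigma$ while leaving every job of $I$ in place. This yields a schedule of $I$ alone that respects all release dates, the non-preemption requirement, and the setup constraints; the completion times of the jobs in $I$ are unchanged. Hence $\sum_{j\in I}C_{j}^{\sigma}$ is the objective of some feasible schedule of $I$, and optimality of $C^{*}(I)$ forces $\sum_{j\in I}C_{j}^{\sigma}\geq C^{*}(I)$. Combining the two bounds gives $C^{*}(I\cup b)\geq C^{*}(I)+g(b)+E^{*}n(b)$, as claimed.

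The main obstacle is the feasibility claim in the $I$-part: one must verify that excising $b$ does not invalidate the remaining schedule. The subtlety is that a setup performed for a now-removed job of $b$ can be left behind, so I would argue that such leftover setups may simply be retained as redundant (wasteful) non-service time, which keeps the schedule feasible and preserves the completion times of every job of $I$. Everything else is the elementary single-machine completion-time estimate already used in Lemma \ref{lem:Truncated}.
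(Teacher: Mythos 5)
Your proposal is correct and follows essentially the same route as the paper: the paper's one-line proof simply asserts that the jobs of $b$ contribute at least $g(b)+E^{*}n(b)$ (since they all start at or after $E^{*}$ and each needs at least $p_{min}=1$ of single-server processing time), leaving the bound $\sum_{j\in I}C_{j}^{\sigma}\geq C^{*}(I)$ implicit. You merely make both halves explicit, including the (correct) observation that deleting $b$ from the optimal schedule of $I\cup b$ leaves a feasible schedule of $I$ with unchanged completion times, so no new ideas are introduced and nothing is missing.
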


\begin{proof}
Since the earliest service time in the optimal solution for $b$ is
at $E^{*}$, we have the minimal total completion time for \textbf{$b$
}is $g(b)+E^{*}n(b)$. 
\end{proof}
We now introduce a benchmark for the online policy by combining the
optimal solution and the truncated optimal solution. We let $C^{m}(I)=\alpha C^{*}(I)+(1-\alpha)C^{t}(I)$
be the benchmark, where $\alpha=\frac{1}{1+k}$ (in the proof we will
see why $\alpha$ is given by this constant). We notice that $C^{*}(I)\geq C^{m}(I)\geq C^{t}(I)$
from the fact that $C^{*}(I)\geq C^{t}(I)$. If we have $\frac{C^{e}(I)}{C^{m}(I)}\leq\rho$,
then we can show that $\frac{C^{e}(I)}{C^{*}(I)}\leq\rho$. The reason
for introducing such a benchmark is that one can combine the results
in Lemma \ref{lem:Truncated} and Lemma \ref{lem:Suppose--is} to
provide a lower bound for $C^{m}(I\cup b)$ that contains both $E$
and $E^{*}$, as we will see later in the proof of Theorem \ref{thm:Gated}.

Next we restate Theorem \ref{thm:Gated} in the main paper and describe
the proof. 
\begin{thm}
(\textbf{Theorem \ref{thm:Gated} in the main paper}) Any policy in
$e\in\Pi_{r}$ has the competitive ratio $\kappa=\max\{\frac{k+1}{k}\gamma,k+1\}$
for the polling system $1\mid r_{i},\tau,p_{max}\leq\gamma p_{min}\mid\sum C_{i}$.
When $\gamma\leq k$, for arbitrary $\epsilon>0$, there is an instance
$I$ such that $\frac{C^{e}(I)}{C^{*}(I)}>\kappa-\epsilon=k+1-\epsilon$.
\end{thm}

\begin{proof}
We prove the theorem by induction. In this proof we only consider
an instance $I$ that the online cyclic policy $e\in\Pi_{1}$ serves
in a busy period. By induction we can finally conclude that $\frac{C^{e}(I)}{C^{m}(I)}\leq\kappa$
for instance $I$, which also implies that $\frac{C^{e}(I)}{C^{*}(I)}\leq\kappa.$
We first show that the batches served by the online policy in the
first cycle, i.e., $I^{1}\in I$, satisfies $\frac{C^{e}(I^{1})}{C^{m}(I^{1})}\leq\kappa$.
We next prove that if the result holds for $\cup_{j=1}^{w-1}I^{j}$,
then it also holds for $(\cup_{j=1}^{w-1}I^{j})\cup b_{1}^{w}$. We
then show the result for $(\cup_{j=1}^{w-1}I^{j})\cup(\cup_{i=1}^{l+1}b_{i}^{w})$
if the result holds for $(\cup_{j=1}^{w-1}I^{j})\cup(\cup_{i=1}^{l}b_{i}^{w})$. 

We assume $p_{min}=1$ and $p_{max}=\gamma$ (the case where $p_{min}=0$
is similar). Notice that $I^{1}=\cup_{i=1}^{k}b_{i}^{1}$ is the union
of batches served in the first cycle under the online policy. Without
loss of generality, we assume the server serves from queue 1 to queue
$k$ in each cycle. Knowing that $I^{1}=\cup_{i=1}^{k}b_{i}^{1}$,
we let $n_{(k)}^{1}\geq n_{(k-1)}^{1}\geq...\geq n_{(1)}^{1}$ be
the descending permutation of $(n_{1}^{1},...,n_{k}^{1})$, and $E^{1}=\min_{i=1}^{k}E_{i}^{1}$,
$S^{1}=\min_{i=1}^{k}S_{i}^{1}$. Then we have (with explanation given
later)

\begin{eqnarray}
C^{t}(I^{1}) & \geq & g(I^{1})+\tau\sum_{i=2}^{k}\sum_{j=i}^{k}n_{(k-j+1)}^{1}+E^{1}\sum_{i=1}^{k}n_{i}^{1},\label{eq:1}
\end{eqnarray}
and

\begin{eqnarray}
C^{e}(I^{1}) & \leq & \gamma g(I^{1})+\tau\sum_{i=2}^{k}\sum_{j=i}^{k}n_{(j)}^{1}+S^{1}\sum_{i=1}^{k}n_{i}^{1}.\label{eq:2}
\end{eqnarray}

The RHS of Inequality (\ref{eq:1}) is the minimal completion time
of a list which has the same number of jobs in each queue as $I^{1}$
and all of these jobs arrive at time $E^{1}$ with each job having
processing time 1. The first term $g(I^{1})$ is the pure completion
time. The second term is because $n_{(k)}^{1}\geq n_{(k-1)}^{1}\geq...\geq n_{(1)}^{1}$,
if all batches are available at time $E^{1}$ and there are no further
arrivals, the best order of serving the batches is to serve from the
longest one to the shortest one. Note that the optimal policy may
start without setting up since it may be the same queue that the server
was idling at and resumed with. In the case where there is no setup
for the first queue, the completion time incurred by setup is $\tau\sum_{i=2}^{k}\sum_{j=i}^{k}n_{(k-j+1)}^{1}$.
The third term in the RHS of Inequality (\ref{eq:1}) is because the
entire service process for the truncated optimal solution starts from
$E^{1}$. Therefore, the RHS of Inequality (\ref{eq:1}) is a lower
bound for $C^{t}(I^{1})$. The RHS of inequality (\ref{eq:2}) is
an upper bound for the online policy, which says that the online policy
may serve batches from the shortest to the longest, starting from
time point $S^{1}$, and all the processing times are bounded by  $\gamma$.
Since we consider I-B in this case, the server in the online policy
does not set up for the first batch as the server was idling in the
same queue as the new arrival. So the completion time resulted by
setup is upper bounded by $\tau\sum_{i=2}^{k}\sum_{j=i}^{k}n_{(j)}^{1}$.

We let $Z(k)=\sum_{i=1}^{k}\sum_{j=i}^{k}n_{(j)}^{1}$ and $Z^{t}(k)=\sum_{i=1}^{k}\sum_{j=i}^{k}n_{(k-j+1)}^{1}$
then

\begin{eqnarray*}
\frac{Z(k)}{Z^{t}(k)} & = & \frac{kn_{(k)}^{1}+(k-1)n_{(k-1)}^{1}+...+n_{(1)}^{1}}{kn_{(1)}^{1}+(k-1)n_{(2)}^{1}+...+n_{(k)}^{1}}\leq\frac{kn_{(k)}^{1}+kn_{(k-1)}^{1}+...+kn_{(1)}^{1}}{n_{(1)}^{1}+n_{(2)}^{1}+...+n_{(k)}^{1}}\leq k.
\end{eqnarray*}
From Fact \ref{For-positive-numbers} we have

\begin{eqnarray}
\frac{C^{e}(I^{1})}{C^{t}(I^{1})} & \leq & \frac{\gamma g(I^{1})+\tau\sum_{i=2}^{k}\sum_{j=i}^{k}n_{(i)}^{1}+S^{1}\sum_{i=1}^{k}n_{i}^{1}}{g(I^{1})+\tau\sum_{i=2}^{k}\sum_{j=i}^{k}n_{(k-j+1)}^{1}+E^{1}\sum_{i=1}^{k}n_{i}^{1}}\leq\frac{\gamma g(I^{1})+\tau(Z(k)-\sum_{i=1}^{k}n_{i}^{1})+E^{1}\sum_{i=1}^{k}n_{i}^{1}}{g(I^{1})+\tau(Z^{t}(k)-\sum_{i=1}^{k}n_{i}^{1})+E^{1}\sum_{i=1}^{k}n_{i}^{1}}\nonumber \\
 & \leq & \max\{\gamma,k,1\}<\kappa.\label{eq:4}
\end{eqnarray}

The Inequality (\ref{eq:4}) follows from the fact that $\tau\leq\min_{i=1}^{k}S_{i}^{1}=\min_{i=1}^{k}\{R_{i}^{1}\}\leq\min_{i=1}^{k}E_{i}^{1}$
because there must be a busy period happening before an I-B. 

So far we have shown $\frac{C^{e}(I^{1})}{C^{m}(I^{1})}\leq\kappa$
from the fact that $C^{t}(I^{1})\leq C^{m}(I^{1}).$ Now we prove
that $C^{e}((\cup_{j=1}^{w-1}I^{j})\cup b_{1}^{w})\leq\kappa C^{m}((\cup_{j=1}^{w-1}I^{1})\cup b_{1}^{w})$.
Clearly if $n_{1}^{w}=0$ then the conclusion holds. Now we let $\bar{n}=\sum_{j=1}^{w-1}\sum_{i=1}^{k}n_{i}^{j}+n_{1}^{w}$
and suppose $n_{1}^{w}\neq0$, we then have 

\begin{eqnarray}
C^{e}((\cup_{j=1}^{w-1}I^{j})\cup b_{1}^{w}) & \leq & C^{e}(\cup_{j=1}^{w-1}I^{j})+\gamma g(b_{1}^{w})+n_{1}^{w}\left(\gamma\sum_{i=1}^{k}n_{i}^{w-1}+k\tau+S_{1}^{w-1}\right),\label{eq:5-1}
\end{eqnarray}
and

\begin{eqnarray}
C^{m}((\cup_{j=1}^{w-1}I^{j})\cup b_{1}^{w}) & = & \alpha C^{*}((\cup_{j=1}^{w-1}I^{j})\cup b_{1}^{w})+(1-\alpha)C^{t}((\cup_{j=1}^{w-1}I^{j})\cup b_{1}^{w})\nonumber \\
 & \geq & \alpha C^{*}(\cup_{j=1}^{w-1}I^{j})+\alpha E_{1}^{w*}n_{1}^{w}+\alpha g(b_{1}^{w})\nonumber \\
 &  & +(1-\alpha)C^{t}(\cup_{j=1}^{w-1}I^{j})+(1-\alpha)g(b_{1}^{w})\nonumber \\
 &  & +(1-\alpha)E_{1}^{w}n_{1}^{w}+(1-\alpha)n_{1}^{w}(\bar{n}-q_{1}^{w}),\label{eq:6-1}
\end{eqnarray}

where 
\begin{eqnarray}
E_{1}^{w}\geq & S^{1}+q_{1}^{w},\label{eq:5}
\end{eqnarray}

and

\begin{eqnarray}
E_{1}^{w*}\geq R_{1}^{w}\geq S_{1}^{w-1}+n_{1}^{w-1}.\label{eq:6}
\end{eqnarray}

The RHS of Inequality (\ref{eq:5-1}) is because the completion time
of batch $b_{1}^{w}$ is bounded by $\gamma$ times its pure completion
time $g(b_{1}^{w}),$ which is the maximal pure completion time for
$b_{1}^{w}$ (if processing time for each job in $b_{1}^{w}$ is $p_{max}=\gamma$),
plus $n_{1}^{w}$ times the maximal starting time $\gamma\sum_{i=1}^{k}n_{i}^{w-1}+k\tau+S_{1}^{w-1}$.
Inequality \ref{eq:6-1} follows from Lemma \ref{lem:Truncated} and
\ref{lem:Suppose--is} directly. Inequality (\ref{eq:5}) holds because
before $E_{1}^{w}$, the server has served $q_{1}^{w}$ jobs. Inequality
(\ref{eq:6}) is because the earliest time to serve $b_{1}^{w}$ is
no earlier than $R_{1}^{w}$. From Inequalities (\ref{eq:5-1},\ref{eq:6-1},\ref{eq:5}
and \ref{eq:6}) we have 

\begin{eqnarray*}
\frac{C^{e}((\cup_{j=1}^{w-1}I^{j})\cup b_{1}^{w})}{C^{m}((\cup_{j=1}^{w-1}I^{j})\cup b_{1}^{w})} & \leq & \frac{C^{e}(\cup_{j=1}^{w-1}I^{j})+\gamma g(b_{1}^{w})+n_{1}^{w}\left(\gamma\sum_{i=1}^{k}n_{i}^{w-1}+k\tau+S_{1}^{w-1}\right)}{\alpha C^{*}(\cup_{j=1}^{w-1}I^{j})+(1-\alpha)C^{t}(\cup_{j=1}^{w-1}I^{j})+\alpha S_{1}^{w-1}n_{1}^{w}+g(b_{1}^{w})+(1-\alpha)n_{1}^{w}\bar{n}+(1-\alpha)n_{1}^{w}\tau}\\
 & \leq & \max\{\frac{C^{e}(\cup_{j=1}^{w-1}I^{j})}{C^{m}(\cup_{j=1}^{w-1}I^{j})},\frac{\gamma g(b_{1}^{w})}{g(b_{1}^{w})},\frac{k\tau+S_{1}^{w-1}}{\alpha S_{1}^{w-1}+(1-\alpha)\tau},\frac{\gamma\bar{n}}{(1-\alpha)\bar{n}}\}\\
 & \leq & \max\{\kappa,\gamma,k+1,\frac{\gamma}{1-\alpha}\}.
\end{eqnarray*}

Notice that $\max\{\kappa,\gamma,k+1,\frac{\gamma}{1-\alpha}\}=\max\{\kappa,\gamma,k+1,\gamma\frac{k+1}{k}\}=\kappa$
from the fact $\alpha=\frac{1}{k+1}.$

Now suppose the result holds for $\bar{b}=\left(\cup_{j=1}^{w-1}I^{j}\right)\cup\left(\cup_{i=1}^{l}b_{i}^{w}\right)$
where $w\geq2$, and we want to show it also holds for $\bar{b}\cup b_{l+1}^{w}=\left(\cup_{j=1}^{w-1}I^{j}\right)\cup\left(\cup_{i=1}^{l+1}b_{i}^{w}\right)$
for $l<k$ by induction, where $n_{l+1}^{w}\neq0$. We abuse the notion
by letting $\bar{n}=\sum_{j=1}^{w-1}\sum_{i=1}^{k}n_{i}^{j}+\sum_{i=1}^{l}n_{i}^{w}$
be the number of jobs served before $b_{l+1}^{w}$, and $\bar{n}_{l+1}^{w}=\sum_{j=l+1}^{k}n_{j}^{w-1}+\sum_{j=1}^{l}n_{j}^{w}$
be the number of jobs served between $S_{l+1}^{w-1}$ and $S_{l+1}^{w}$.
Because the server stays in queue $i$ at the $k^{th}$ visit for
no more than time $\gamma n_{i}^{k}$ and serves $n_{i}^{k}$ jobs
,we have

\begin{eqnarray*}
C^{e}(\bar{b}\cup b_{l+1}^{w}) & \leq & C^{e}(\bar{b})+\gamma g(b_{l+1}^{w})+n_{l+1}^{w}\left(\gamma\bar{n}_{l+1}^{w}+k\tau+S_{l+1}^{w-1}\right),
\end{eqnarray*}
and

\begin{eqnarray*}
C^{m}(\bar{b}\cup b_{l+1}^{w}) & = & \alpha C^{*}(\bar{b}\cup b_{l+1}^{w})+(1-\alpha)C^{t}(\bar{b}\cup b_{l+1}^{w})\\
 & \geq & \alpha C^{*}(\bar{b})+\alpha E_{l+1}^{w*}n_{l+1}^{w}+\alpha g(b_{l+1}^{w})\\
 &  & +(1-\alpha)C^{t}(\bar{b})+(1-\alpha)g(b_{l+1}^{w})\\
 &  & +(1-\alpha)E_{l+1}^{w}n_{l+1}^{w}+(1-\alpha)n_{l+1}^{w}(\bar{n}-q_{l+1}^{w}),
\end{eqnarray*}

where 
\begin{eqnarray*}
E_{l+1}^{w} & \geq & \tau+q_{l+1}^{w},
\end{eqnarray*}
and

\begin{eqnarray*}
E_{l+1}^{w*} & \geq R_{l+1}^{w}>S_{l+1}^{w-1}+n_{l+1}^{w-1} & .
\end{eqnarray*}

Similar to our discussion above, we have 

\begin{eqnarray*}
\frac{C^{e}(\bar{b}\cup b_{l+1}^{w})}{C^{m}(\bar{b}\cup b_{l+1}^{w})} & \leq & \frac{C^{e}(\bar{b})+\gamma g(b_{l+1}^{w})+n_{l+1}^{w}\left(\gamma\bar{n}_{l+1}^{w}+k\tau+S_{l+1}^{w-1}\right)}{\alpha C^{*}(\bar{b})+(1-\alpha)C^{t}(\bar{b})+\alpha S_{l+1}^{w-1}n_{l+1}^{w}+g(b_{l+1}^{w})+(1-\alpha)n_{l+1}^{w}\bar{n}+(1-\alpha)n_{1}^{w}\tau}\\
 & \leq & \max\{\frac{C^{e}(\bar{b})}{C^{m}(\bar{b})},\frac{\gamma g(b_{l+1}^{w})}{g(b_{l+1}^{w})},\frac{k\tau+S_{l+1}^{w-1}}{\alpha S_{l+1}^{w-1}+(1-\alpha)\tau},\frac{\gamma\bar{n}}{(1-\alpha)\bar{n}}\}\\
 & \leq & \max\{\kappa,\gamma,k+1,\frac{\gamma}{1-\alpha}\}=\kappa.
\end{eqnarray*}

Now we show that results hold for the second type of busy period,
i.e., II-B. For II-B, the first arrival occurs in a different queue
from where the server was idling, thus the server starts this period
with a setup. Note that the very first busy period is also a II-B.
If it is the first busy period, then we have 
\begin{eqnarray*}
C^{t}(I^{1}) & \geq & g(I^{1})+\tau\sum_{i=1}^{k}\sum_{j=i}^{k}n_{(k-j+1)}^{1},
\end{eqnarray*}
and

\begin{eqnarray*}
C^{e}(I^{1}) & \leq & \gamma g(I^{1})+\tau\sum_{i=1}^{k}\sum_{j=i}^{k}n_{(j)}^{1}+\tau\sum_{i=1}^{k}n_{i}^{1}.
\end{eqnarray*}

Thus 
\begin{eqnarray*}
\frac{C^{e}(I^{1})}{C^{t}(I^{1})} & \leq & \frac{\gamma g(I^{1})+\tau Z(k)+\tau\sum_{i=1}^{k}n_{i}^{1}}{g(I^{1})+\tau Z^{t}(k)}\leq\max\{\gamma,k+1\}\leq\kappa.
\end{eqnarray*}

If it is not the very first busy period, then 
\begin{eqnarray*}
C^{t}(I^{1}) & \geq & g(I^{1})+\tau\sum_{i=2}^{k}\sum_{j=i}^{k}n_{(k-j+1)}^{1}+E^{1}\sum_{i=1}^{k}n_{i}^{1},
\end{eqnarray*}
and

\begin{eqnarray*}
C^{e}(I^{1}) & \leq & \gamma g(I^{1})+\tau\sum_{i=2}^{k}\sum_{j=i}^{k}n_{(j)}^{1}+S^{1}\sum_{i=1}^{k}n_{i}^{1}.
\end{eqnarray*}

If $R^{1}<\tau$, then $S^{1}=R^{1}+\tau\leq2\tau$, and $E^{1}\geq\tau.$
The argument is the same as the very first busy period, which we have
discussed. Now we only need to discuss the case when $R^{1}\geq\tau,$
in which we have 
\begin{eqnarray}
\frac{C^{e}(I^{1})}{C^{t}(I^{1})} & \leq & \frac{\gamma g(I^{1})+\tau\sum_{i=2}^{k}\sum_{j=i}^{k}n_{(j)}^{1}+S^{1}\sum_{i=1}^{k}n_{i}^{1}}{g(I^{1})+\tau\sum_{i=2}^{k}\sum_{j=i}^{k}n_{(k-j+1)}^{1}+E^{1}\sum_{i=1}^{k}n_{i}^{1}}\nonumber \\
 & \leq & \frac{\gamma g(I^{1})+\tau\sum_{i=2}^{k}\sum_{j=i}^{k}n_{(j)}^{1}+(R^{1}+\tau)\sum_{i=1}^{k}n_{i}^{1}}{g(I^{1})+\tau\sum_{i=2}^{k}\sum_{j=i}^{k}n_{(k-j+1)}^{1}+R^{1}\sum_{i=1}^{k}n_{i}^{1}}\label{eq:10}
\end{eqnarray}

From the fact that $\frac{\gamma g(I^{1})+\tau\sum_{i=2}^{k}\sum_{j=i}^{k}n_{(j)}^{1}+(R^{1}+\tau)\sum_{i=1}^{k}n_{i}^{1}}{g(I^{1})+\tau\sum_{i=2}^{k}\sum_{j=i}^{k}n_{(k-j+1)}^{1}+R^{1}\sum_{i=1}^{k}n_{i}^{1}}$
is decreasing as $R^{1}$ increases, we have 
\begin{eqnarray*}
\frac{\gamma g(I^{1})+\tau\sum_{i=2}^{k}\sum_{j=i}^{k}n_{(j)}^{1}+(R^{1}+\tau)\sum_{i=1}^{k}n_{i}^{1}}{g(I^{1})+\tau\sum_{i=2}^{k}\sum_{j=i}^{k}n_{(k-j+1)}^{1}+R^{1}\sum_{i=1}^{k}n_{i}^{1}} & \leq & \frac{\gamma g(I^{1})+\tau\sum_{i=2}^{k}\sum_{j=i}^{k}n_{(j)}^{1}+2\tau\sum_{i=1}^{k}n_{i}^{1}}{g(I^{1})+\tau\sum_{i=2}^{k}\sum_{j=i}^{k}n_{(k-j+1)}^{1}+\tau\sum_{i=1}^{k}n_{i}^{1}}\\
 & \leq & \max\{\gamma,k+1\}\leq\kappa.
\end{eqnarray*}

Discussion for $(\cup_{j=1}^{w-1}I^{j})\cup b_{l+1}^{w}$ for $l=0,...,k-1$
is similar to our discussion for I-B.

Now we prove the approximate tightness argument of this theorem by
constructing a special instance $I$. When $\gamma\leq k$, we have
$\kappa=k+1$. Suppose at time 0 there is one job with processing
time $\gamma$ at queue 1 and one job with $p=1$ at the other queues.
At time $\gamma+\tau+\epsilon_{1}$ (with small $\epsilon_{1}>0$)
a batch $b_{1}^{2}$ arrives at queue 1 and each job in $b_{1}^{2}$
has processing time  $p=1$. We thus have 
\begin{eqnarray*}
C^{e}(I) & \geq & g(I)+n_{1}^{2}(k+1)\tau+\frac{k(k+1)}{2}\tau,
\end{eqnarray*}

and

\begin{eqnarray*}
C^{*}(I) & \leq & \gamma g(I)+n_{1}^{2}(\tau+\epsilon_{1})+\frac{k(k+1)}{2}\tau+(k-1)\epsilon_{1}.
\end{eqnarray*}

Thus when $\tau=(n_{1}^{2})^{2}$ there is an $n_{1}^{2}$ such that
for arbitrary $\epsilon,$
\begin{eqnarray*}
\frac{C^{e}(I)}{C^{*}(I)} & > & 1+k-\epsilon.
\end{eqnarray*}

The theorem also holds for $p_{max}=0$, for simplicity we do not
show the proof here. 

We now provide the proof for $\Pi_{2}$, however this time we only
need to show $C^{ew}(I)\leq\kappa C^{m}(I)$ for any busy period $I$
where $C^{ew}(I)$ is the completion time by a policy from $\Pi_{2}$.
Notice we no longer need to consider different cases for I-B and II-B
because even when the system is idling, the cyclic policy still keeps
setting up queues in cycle. When a new arrival occurs after the system
being empty for some time, we simply regard this time $R^{1}$ as
the beginning of a busy period. Without loss of generality, we assume
the server begins serving with queue 1 in this busy period. Let $n_{(k)}^{1}\geq n_{(k-1)}^{1}\geq...\geq n_{(1)}^{1}$
be the descending permutation of $(n_{1}^{1},...,n_{k}^{1})$ , $R^{1}=\min_{i=1}^{k}R_{i}^{1}$
, $E^{1}=\min_{i=1}^{k}E_{i}^{1}$ and $S^{1}=\min_{i=1}^{k}S_{i}^{1}$.
Notice some of $n_{(i)}^{1}$ may be zero (not all of them) but the
server still sets up the queue even if the queue is empty. We have

\begin{eqnarray*}
C^{m}(I^{1}) & \geq & g(I^{1})+\tau\sum_{i=2}^{k}\sum_{j=i}^{k}n_{(k-j+1)}^{1}+E^{1}\sum_{i=1}^{k}n_{i}^{1},
\end{eqnarray*}
and

\begin{eqnarray*}
C^{ew}(I^{1}) & \leq & \gamma g(I^{1})+\tau k\sum_{i=1}^{k}n_{(i)}^{1}+R^{1}\sum_{i=1}^{k}n_{i}^{1}.
\end{eqnarray*}
To show $\frac{C^{ew}(I^{1})}{C^{m}(I^{1})}\leq\kappa$, by abusing
the notation a little, we first let $Z(k)=k\sum_{i=1}^{k}n_{(i)}^{1}$
and $Z^{t}(k)=\sum_{i=1}^{k}\sum_{j=i}^{k}n_{(k-j+1)}^{1},$ so

\begin{eqnarray*}
 &  & \frac{Z(k)}{Z^{t}(k)}=\frac{kn_{(k)}^{1}+kn_{(k-1)}^{1}+...+kn_{(1)}^{1}}{kn_{(1)}^{1}+(k-1)n_{(2)}^{1}+...+n_{(k)}^{1}}\leq\frac{kn_{(k)}^{1}+kn_{(k-1)}^{1}+...+kn_{(1)}^{1}}{n_{(1)}^{1}+n_{(2)}^{1}+...+n_{(k)}^{1}}\leq k.
\end{eqnarray*}

Thus 
\begin{eqnarray}
\frac{C^{ew}(I^{1})}{C^{m}(I^{1})} & \leq & \frac{\gamma g(I^{1})+\tau k\sum_{i=1}^{k}n_{(i)}^{1}+R^{1}\sum_{i=1}^{k}n_{i}^{1}}{g(I^{1})+\tau\sum_{i=2}^{k}\sum_{j=i}^{k}n_{(k-j+1)}^{1}+E^{1}\sum_{i=1}^{k}n_{i}^{1}}\nonumber \\
 & = & \frac{\gamma g(I^{1})+\tau k\sum_{i=1}^{k}n_{(i)}^{1}+R^{1}\sum_{i=1}^{k}n_{i}^{1}}{g(I^{1})+\tau\sum_{i=2}^{k}\sum_{j=i}^{k}n_{(k-j+1)}^{1}+\max\{\tau,R^{1}\}\sum_{i=1}^{k}n_{i}^{1}}\nonumber \\
 & \leq & \max\{\gamma,k+1\}\label{eq:12}\\
 & \leq & \kappa.\nonumber 
\end{eqnarray}

Inequality (\ref{eq:12}) holds because if $R^{1}\leq\tau$, then
\begin{eqnarray*}
\frac{\gamma g(I^{1})+\tau k\sum_{i=1}^{k}n_{(i)}^{1}+R^{1}\sum_{i=1}^{k}n_{i}^{1}}{g(I^{1})+\tau\sum_{i=2}^{k}\sum_{j=i}^{k}n_{(k-j+1)}^{1}+E^{1}\sum_{i=1}^{k}n_{i}^{1}} & \leq & \frac{\gamma g(I^{1})+\tau k\sum_{i=1}^{k}n_{(i)}^{1}+\tau\sum_{i=1}^{k}n_{i}^{1}}{g(I^{1})+\tau\sum_{i=2}^{k}\sum_{j=i}^{k}n_{(k-j+1)}^{1}+\tau\sum_{i=1}^{k}n_{i}^{1}}\\
 & \leq & \max\{\gamma,k+1\}.
\end{eqnarray*}
And if $R^{1}>\tau,$ then 
\begin{eqnarray*}
\frac{\gamma g(I^{1})+\tau k\sum_{i=1}^{k}n_{(i)}^{1}+R^{1}\sum_{i=1}^{k}n_{i}^{1}}{g(I^{1})+\tau\sum_{i=2}^{k}\sum_{j=i}^{k}n_{(k-j+1)}^{1}+E^{1}\sum_{i=1}^{k}n_{i}^{1}} & \leq & \frac{\gamma g(I^{1})+\tau(k+1)\sum_{i=1}^{k}n_{(i)}^{1}+(R^{1}-\tau)\sum_{i=1}^{k}n_{i}^{1}}{g(I^{1})+\tau\sum_{i=1}^{k}\sum_{j=i}^{k}n_{(k-j+1)}^{1}+(R^{1}-\tau)\sum_{i=1}^{k}n_{i}^{1}}\\
 & \leq & \max\{\gamma,k+1,1\}.
\end{eqnarray*}

The discussions for $(\cup_{j=1}^{w-1}I_{j})\cup b_{l+1}^{w}$ is
similar to proof for $\Pi_{1}$. 

We now show the proof for the policy set $\Pi_{4}$. The proof for
$\Pi_{3}$ is similar to the proof for $\Pi_{1}$. We prove the theorem
by induction. Again for simplicity, we assume $p_{min}=1$ so that
$p_{max}=\gamma$. We want to show $C^{g}(I)\leq\kappa C^{m}(I)$
holds for any instance $I$, where $C^{g}(I)$ is the completion time
for any policy $g\in\Pi_{4}$. Again we assume the server routes from
queue 1 to queue $k$ in each cycle. Let $n_{(k)}^{1}\geq n_{(k-1)}^{1}\geq...\geq n_{(1)}^{1}$
is the descending permutation of $(n_{1}^{1},...,n_{k}^{1})$ and
$E^{1}=\min_{i=1}^{k}E_{i}^{1}$ , $S^{1}=\min_{i=1}^{k}S_{i}^{1}$,
we have

\begin{eqnarray*}
C^{t}(I^{1}) & \geq & g(I^{1})+\tau\sum_{i=2}^{k}\sum_{j=i}^{k}n_{(k-j+1)}^{1}+E^{1}\sum_{i=1}^{k}n_{i}^{1},
\end{eqnarray*}
and

\begin{eqnarray*}
C^{g}(I^{1}) & \leq & \gamma g(I^{1})+\tau k\sum_{i=1}^{k}n_{(i)}^{1}+S^{1}\sum_{i=1}^{k}n_{i}^{1}.
\end{eqnarray*}
The rest of discussions are similar to the proof for $\Pi_{1}$, except
now we have $E_{l+1}^{w*}>R_{l+1}^{w}$ because the policy is gated.

\begin{table}[h]
\begin{tabular}[b]{|>{\centering}m{3cm}|>{\centering}m{4cm}|>{\centering}m{3cm}|>{\centering}m{4cm}|}
\hline 
Notation & Meaning & Notation & Meaning\tabularnewline
\hline 
$b_{i}^{w},$

$i=1,...,k$ & The job instance that are served by the cyclic online policy during
the $w^{th}$ visit ($w^{th}$ cycle) to queue $i$, within a busy
period & $I^{w}$, $w=1,2,...$ & $I^{w}=\cup_{i=1}^{k}b_{i}^{w}$, the union of instances that are
served by the online policy during the $w^{th}$ cycle within a busy
period\tabularnewline
\hline 
$n_{i}^{w}=n(b_{i}^{w}),$ $i=1,...,k$ & The number of jobs in batch $b_{i}^{w}$ & $\alpha=\frac{1}{k+1}$ & A constant\tabularnewline
\hline 
$S_{i}^{w},$

$i=1,...,k$ & The time when the online policy starts to serve batch $b_{i}^{w}$ & $S^{1}=\min_{i=1}^{k}\{S_{i}^{1}\}$ & The earliest staring time for $I^{1}$ by the online policy\tabularnewline
\hline 
$R_{i}^{w}$, $i=1,...,k$ & The earliest release time (arrival time) of batch $b_{i}^{w}$ & $R^{1}=\min_{i=1}^{k}\{R_{i}^{1}\}$ & The earliest release time of $I^{1}$\tabularnewline
\hline 
$E_{i}^{w}$ & The time when the truncated optimal offline policy starts to serve
batch $b_{i}^{w}$ & $E^{1}=\min_{i=1}^{k}\{E_{i}^{1}\}$ & The earliest time when the truncated optimal offline policy starts
to serve $I^{1}$\tabularnewline
\hline 
$E_{i}^{w*}$ & The time when the optimal offline policy starts to serve batch $b_{i}^{w}$ & $q_{i}^{w}$, $i=1,...,k$ & The jobs in $(\cup_{j=1}^{w-1}I^{j})\cup(\cup_{l=1}^{i-1}b_{l}^{w})$
that have been served by the optimal policy before $E_{i}^{w}$\tabularnewline
\hline 
$g(I)=\frac{n(I)(n(I)+1)}{2}$ & Pure completion time for instance $I$ & Busy period & The time period when the system is non-empty\tabularnewline
\hline 
I-B & Type I busy period. The server (under the online policy) starts the
new busy period without setting up & II-B & Type II busy period. The server (under the online policy) starts the
new busy period by setting up\tabularnewline
\hline 
\end{tabular}

\caption{List of Notations for Appendix\label{tab:List-of-Notations}}
\end{table}
\end{proof}

\end{document}